\newcommand{\kn}{$k$-out-of-$n$\xspace}
\newcommand{\kng}{$k$-out-of-$n$: G\xspace}
\newcommand{\kngb}{$k$-out-of-$n$: G balanced\xspace}
\newcommand{\knpgb}{$k$-out-of-$n$ pairs: G balanced\xspace}
\newcommand{\ckngb}{circular $k$-out-of-$n$: G balanced\xspace}
\newcommand{\etal}{\textit{et~al.}\xspace}
\newtheorem{definition}{Definition}
\newtheorem{proposition}{Proposition}[definition]
\newtheorem{lemma}{Lemma}
\newtheorem{remark}{Remark}[definition]
\newtheorem{corollary}{Corollary}
\begin{document}
    \nocite{*}
    \title{Reliability Improvement of Circular $k$-out-of-$n$: G Balanced Systems through Center of Gravity}

    \author[1]{Yongkyu Cho}
    \author[2]{Seung Min Baik}
    \author[2]{Young Myoung Ko\thanks{Correspondence to Young Myoung Ko (youngko@postech.ac.kr)}}
    
    \affil[1]{Department of Industrial Engineering, Kangnam University, South Korea\\ E-mail: yongkyu.cho@kangnam.ac.kr}
    \affil[2]{Department of Industrial and Management Engineering, Pohang University of Science and Technology, South Korea\\E-mails: gshs27@postech.ac.kr, youngko@postech.ac.kr}

    \maketitle

    \begin{abstract}   
        This paper considers a \ckngb system equipped with homogeneous and stationary units. Building on previous research by Endharta \etal\cite{EYK2018}, we propose a new balance definition in \ckngb systems based on the concept of center of gravity. According to this condition, a \ckngb system is considered balanced if its center of gravity is located at the origin. This new balance condition is not only simple but also advantageous as it covers the previous two balance conditions of symmetry and proportionality. To evaluate the system's reliability, we consider the minimum tie-sets, and extensive numerical studies verify the enhancement of system reliability resulting from the proposed balance definition.
        \\
        \\
        \emph{Keywords: Reliability evaluation, Circular $k$-out-of-$n$: G balanced systems, Balance condition, Center of gravity, Minimum-tie set, Minimal path set}
    \end{abstract}

    \newpage
        
    \section*{Notation}
    \begin{itemize}
        \item $n$: number of units in a system
        \item $k$: minimum number of non-failed units for a functioning system
        \item $r$: probability that a unit is functioning properly
        \item $\mathcal{U}$: index set of all units in the system; $\mathcal{U}=\{1,2,...,i,...,n\}$
        \item $U$ or $U_j$: ($j^{\textrm{th}}$) subset of $\mathcal{U}$ denoting the set of non-failed units; $U_j=\{u_1,u_2,...,u_l,...,u_{|U_j|}\}$ where each $u_l$ corresponds to a unit index $i\in \mathcal{U}$ such that $u_1<u_2<\cdots<u_{|U_j|}$.
        \item $\mathbf{U}_k$: set of $k$-combinations of set $\mathcal{U}$; $\mathbf{U}_k=\{U_1,U_2,...,U_{|\mathbf{U}_k|}\}$ where $|\mathbf{U}_k|=\binom{n}{k}.$
        \item $d_{l}$: distance between units $u_{l+1}$ and $u_{l}$ in a subset $U$ for $l=1,2,...,|U|$; $d_{l}=u_{l+1}-u_{l}$ if ${l}\neq |U|$ and $n+u_1-u_{|U|}$ if $l=|U|$.
        \item $D_U$: distance tuple for a subset $U$; $D_U=(d_1,d_2,...,d_{|U|})$
        \item $E^{(l)}$: reverse tuple of $D_U$ starting from the $l^\textrm{th}$ distance; $E^{(l)}=(d_l,d_{l-1},...,d_2,d_1,d_{|U|},$\\$d_{|U|-1},...,d_{l+1})$
        \item $\mathbf{E}_{U}^B$: set of $E^{(l)}$'s such that $E^{(l)}=D_U$ for a subset $U$; the system is balanced if $|\mathbf{E}_U^B|$ is an even number or if $|\mathbf{E}_U^B|>1$.
        \item $T$ or $T_t$: ($t^{\textrm{th}}$) tie-set of a system; $T_t=\{u_1,u_2,...,u_l,...,u_{|T_t|}\}$
        \item $\mathbf{T}$: set of all tie-sets of a system; $\mathbf{T}=\{T_1,T_2,...,T_t,...,T_{|\mathbf{T}|}\}$
        \item $T^M$ or $T_m^M$: ($m^{\textrm{th}}$) minimum tie-set of a system; $T_m^M=\{u_1,u_2,...,u_l,...,u_{|T_m^M|}\}$
        \item $\mathbf{T}^M$: set of minimum tie-sets; $\mathbf{T}^M=\{T_1^M,T_2^M,...,T_m^M,...,T_{|\mathbf{T}^M|}\}$
        \item $\mathbf{T}^N$: set of non-minimum tie-sets; $\mathbf{T}^M=\mathbf{T}\backslash\mathbf{T}^N.$
        \item $X_i$: binary state variable of unit $i$; $X_i=1$ if unit $i$ is functioning, $X_i=0$ otherwise.
        \item $\mathbf{X}$: system state vector; $\mathbf{X}\equiv[X_1,...,X_n]$
        \item $\phi(\cdot)$: system structure function; $\phi(\mathbf{X})=1$ if the system is functioning, $\phi(\mathbf{X})=0$ otherwise.
        \item $R_S$: system reliability; $R_S\equiv\mathbb{P}\left[\phi\left(\mathbf{X}\right)=1\right]=\mathbb{E}\left[\phi\left(\mathbf{X}\right)\right].$
    \end{itemize}
	
	
    \section{Introduction} \label{sec:introduction}
    This paper considers systems with spatially distributed units such as balanced engine systems in planetary descent vehicles and Unmanned Aerial Vehicles (UAVs), commonly called \textit{drones} equipped with multiple rotary-wings as shown in Fig.~\ref{fig:system_example}. In the balanced engine system, if an engine in a pair fails during landing, system dynamics mandate that the other engine in the pair must be turned off in order to uphold the balance of the descent vehicle \cite{HBS2000}. The \kn balanced system can be used to analyze the reliability in such a situation because there may be a minimum number of pairs of engines required to prevent rapid descent during landing. Similarly to the descent system, there may be a minimum number of operating rotary-wings to operate drones reliably. In fact, many systems with spatially distributed units have a circular configuration with evenly distributed units. The definition of balance varies depending on the context of the analysis, so reliability evaluation does not have a clear-cut solution. Additionally, considering the units' positions is the most significant factor that makes the problem difficult when defining the system's balance.
    
    Inspired by drones and descent systems, Hua and Elsayed \cite{HE2016a,HE2016b,HE2018}, along with a series of papers, defined this situation as a \knpgb system and conducted extensive reliability studies for this system. Unlike descent systems where each engine provides propulsion in a vertical direction to the ground, however, the rotors of a drone generate lift force while rotating, making the definition of balance more complicated. In the series of research studies regarding drone reliability, the balance condition was mainly defined based on system symmetry.
    
	
    Meanwhile, Endharta \etal \cite{EYK2018} proposed a new balance condition based on the proportionality of the system for a similar situation. That is, as long as the operating units are evenly distributed throughout the system, balance can be maintained. Furthermore, considering such balance conditions, the units of the system no longer need to necessarily form pairs, which can lead to a relaxation of the target system from a \kn pairs system to a \kn system. As a result, Endharta \etal \cite{EYK2018} investigated a reliability analysis problem for \ckngb systems. Note that a \kn pairs system can be seen as a special case of a \kn system. For example, if Fig.~\ref{fig:octacopter_model} is considered a $k$-out-of-$4$ pairs system, it can be regarded as a $2k$-out-of-$8$ system with some additional conditions. Therefore, a balance condition based on symmetry can also be applied to examine the balance of \kn systems, which can lead to a direct comparison between two different balance conditions.
	
    Building on the previous research study by Endharta \etal\cite{EYK2018}, this paper presents another new balance definition for \ckngb systems. This new balance definition is advantageous due to its simplicity and generality, as it covers the two balance conditions previously investigated. In essence, the new balance definition directly takes into account the concept of center of gravity. The main contributions of our work are summarized as follows.
    \begin{itemize}
		\item We propose a new balance definition for \ckngb systems that can enhance the system reliability.
		\item We investigate the inclusion relationships among the three different balance conditions discovered so far through mathematical proofs and numerical examples.
		\item We demonstrate the reliability improvement resulting from the new balance definition using extensive numerical analysis. We also discuss the effect of system parameters on the overall system reliability.
    \end{itemize}
	
    The remainder of this paper is organized as follows. Section \ref{sec:literature_review} conducts a literature review of existing reliability studies on a variety of \kngb systems. Section \ref{sec:system} describes the target system and outline the key modeling assumptions. Section \ref{sec:balance} presents three definitions of balance conditions, including the newly proposed one. This section also provides an in-depth investigation of the relationships among the three balance conditions. Section \ref{sec:reliability} explains the reliability evaluation method with a descriptive numerical example. Section \ref{sec:numerical} provides extensive numerical results showing the enhancement of reliability resulting from the proposed balance definition. Finally, Section \ref{sec:conclusion} concludes and suggests possible extensions of the research study.
	
    \section{Literature Review} \label{sec:literature_review}
    The \kn models can be largely divided into two types: \kn: F and \kn: G. A \kn: F system fails when at least $k$ units are failed whereas a \kn: G system functions when at least $k$ units are non-failed. In a \kng system with homogeneous units of binary-states (failed and non-failed), the number of non-failed units follows the binomial distribution; the system reliability coincides with the probability that $k$ or more units are non-failed \cite{ELSAYED2021}. 
	
    The \ckngb system, which is the target system of this paper, is a variant of \kng model in which the units are arranged circularly, and the reliable system requires to satisfy a certain balance condition. Among the variants of \ckngb models, the most relevant special case is the $k$-out-of-$n$ \textit{pairs}: G balanced system which has $n$ pairs of units distributed evenly on a circular configuration. For example, the graphical model in Fig.~\ref{fig:octacopter_model} can be considered to have four pairs of units: (1,5), (2,6), (3,7), and (4,8). Sarper and Sauer \cite{SS2002} presented the first reliability study for such a system: balanced engine systems in planetary descent vehicles, e.g., four-engine (i.e., two pairs) and six-engine (i.e., three pairs) configurations. The authors provided a basic reliability evaluation framework based on simple stochastic models such as Bernoulli distributed unit state and exponentially distributed failure time. Applying the \knpgb model to drone systems, Hua and Elsayed \cite{HE2016a,HE2016b} conducted the extensive reliability studies regarding degradation analysis and reliability estimation. Notably, Hua and Elsayed \cite{HE2016b} were the first to define the balanced state of a drone system by symmetry, considering the concept of Moment Difference to measure the degree of symmetry. The authors considered two different scenarios (rebalancing is allowed or not) and develop a systematic approach for estimating the reliability of \knpgb systems. Although the proposed approach was effective for reliability evaluation, the computational intractability when systems are large remained a limitation. In this regard, Hua and Elsayed \cite{HE2018} proposed a Monte Carlo simulation-based reliability approximation method which is fast as well as accurate hence applicable even for large systems. Considering more complex configurations of drone systems, Guo and Elsayed \cite{GE2019} presented a $(k_1,k_2)$-out-of-$(n,m)$ pairs: G balanced system. The system modeled a rotary UAV with multi-level of rotors where $m$ rotors are arranged vertically in parallel in the same position. Reliability estimations for two scenarios (forced-down rotors are considered as failed and forced down rotors are considered as standbys) have been obtained by enumerating operational states and calculating the probability of their occurrences.
	
    Unlike the above literature that only considered system symmetry as a measure of balance in drone systems and conducted extensive reliability studies, Endharta \etal \cite{EYK2018} proposed a new balance definition based on system proportionality that can be applied to drone systems. By considering the new balance definition, the target system can be relaxed from a \kn pairs to a \kn system. Endharta \etal \cite{EYK2018} considered a system with homogeneous and stationary units, and the minimum tie-sets were enumerated to evaluate the system reliability. The results showed the reliability improvement compared to when considering only the symmetry-based balance definition. In the subsequent study by Endharta and Ko \cite{EK2020}, a load-sharing system was considered where the amount of load is equally distributed among the working units. The failure time of each unit was assumed to follow an exponential distribution, and the load-sharing relationship between the amount of load and the failure rate is assumed to follow a power rule. System failure paths were arranged to evaluate reliability, and for larger systems, reliability was approximated using Monte Carlo simulation.
	
    Regarding the technique of reliability evaluation, the Inclusion-Exclusion (IE) method is a well-known approach that can be used for the systems consisting of units with constant failure probability. Using the IE method, reliability expression of such systems can be derived based on the concepts of minimum tie-sets or cut-sets \cite{ELSAYED2021}. To improve the computational efficiency of the IE method, Heidtmann \cite{H1982}  proposed an improved version that eliminates certain terms in the expression. Another popular method called the sum-of-disjoint-products (SDP) method, which is also related to minimum tie-sets or cut-sets, has been explored by McGrady \cite{M1985}. For \kng systems with independent but non-identical units, Rushdi \cite{R1986} developed a pivotal decomposition approach for evaluating reliability. Recently, Hao \etal \cite{HYWWS2019} proposed a new IE method called Quick Inclusion-Exclusion (QIE) to increase the efficiency of the IE method and reduce the amount of required memory.
	
    There have been a body of recent reliability studies on a variant of balanced systems. Wang \etal \cite{WZZ2022} studied a multi-state $k$-out-of-$n$: F balanced system with a rebalancing mechanism, which is important in engineering fields like new energy storage and aeronautics. The components and system are assumed to have multiple states, and an age maintenance strategy and optimization model were investigated to obtain the optimal results. The proposed model was demonstrated through numerical examples based on a product line balancing problem. Dui \etal \cite{DZBC2021} proposed a model for studying the mission reliability and structure optimization of an UAV swarm based on importance measures. The mission reliability model was based on polygonal linear consecutive-$k$-out-of-$n$: F systems, and the structure optimization has been analyzed using conditional reliability, conditional failure rate, and remaining useful life. The proposed method was demonstrated through numerical examples of triangular and quadrilateral UAV swarms. Wang \etal \cite{WQWL2021} presented a condition-based preventive maintenance policy for balanced systems with identical components. The system state was evaluated based on the deterioration levels of all components, and PM activities are employed to avoid competing failures. The optimal preventive maintenance thresholds are determined by minimizing the system maintenance cost within the framework of semi-Markov decision process. Zhao and Wang \cite{ZW2022} suggested a new maintenance policy optimization method for systems with two balanced components, assuming that the components degrade over time according to a bivariate Wiener process. The objective of the maintenance actions was to eliminate the differences of degradation levels of system components at the cost of aggravating the degradation, and the model was optimized using Markov decision process with both finite and infinite planning horizons. Wang \etal \cite{WZL2022} proposed a maintenance policy optimization model for balanced systems composed of multiple functionally-exchangeable units, where a unit may fail due to self-failure or external stress. The objective was to find the optimal number of operating units to minimize the maintenance cost per unit time, and an illustrative example was used to demonstrate the effectiveness of the proposed policy. Wu \etal \cite{WZWS2022} studied a load-sharing consecutive-$k$-out-of-$r$-from-$n$ subsystems: F balanced system with linear and circular structures that maintains balance by forcing working components to standby or resuming standby components to operate. The system fails if there are $r$ consecutive subsystems in which at least $k$ subsystems fail, and the system reliability was analyzed using the Markov processes and finite Markov chain imbedding approach. Tian \etal \cite{TYLW2023} presented a new reliability evaluation method for Performance-based balanced systems with common bus performance sharing (PBSs-CBPS), which considers the balance degree threshold, transmission loss, and transmission capacity limit. A continuous-time discrete-state Markov model was built to address the transition behaviors of components, and the universal generation function method combined with nonlinear programming was proposed to calculate system reliability. Wang \etal \cite{WNZW2023} investigated two reliability models for balanced systems with multi-state protective devices, considering rebalancing mechanisms and failure criteria of both dynamic and static balanced concepts. The models involved new triggering and variable protection mechanisms to reduce the impact of shocks and internal degradation on units, and the reliability was derived using the Markov process imbedding method with Monte-Carlo simulation. 
	
    In the next section, we describe the target system and introduce the key assumptions considered in this paper.
	
    \section{System Description} \label{sec:system}
    Consider a system with circularly arranged units such as descent systems equipped by $n$ engines or a drone product equipped by $n$ rotary-wings. Although the system starts up in the perfect state with $n$ operational units, some of them can go wrong as time goes by. The system with some failed units (say a \textit{subsystem}), however, still has possibility to be reliable if the number of non-failed units are enough to generate sufficient forces to remain the system in the air, for example, at least $k$ units are non-failed then the drone can stay afloat in the air. In such situation, what we must consider together is the balance of the overall system. The set of non-failed units that can keep the balance will generate the total thrust in the same direction to the lifting forces so that the system stays in the air. The \ckngb system is a reliability model that abstracts out such a situation. Fig.~\ref{fig:system_example} shows a pair of such system (a drone product) and its graphical model. 
	
    \begin{figure}[ht]
	\centering
	\subfloat[][A multi-rotor drone product with eight wings \cite{drone_product}]
	{
		\centering\resizebox{0.5\textwidth}{!}{\includegraphics{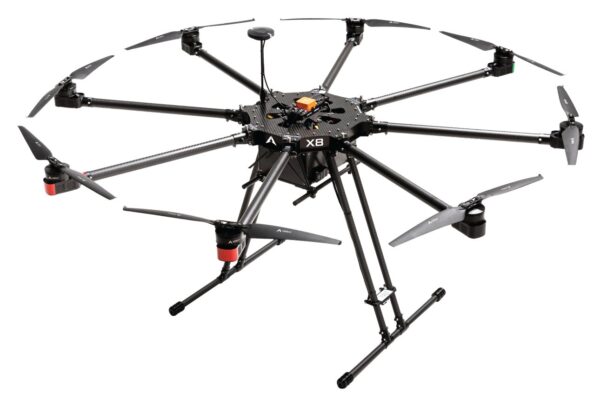}}
		\label{fig:octacopter}		
	}
	~
	\subfloat[][A circular $k$-out-of-$8$: G balanced system]
	{
		\centering\resizebox{0.4\textwidth}{!}{\includegraphics{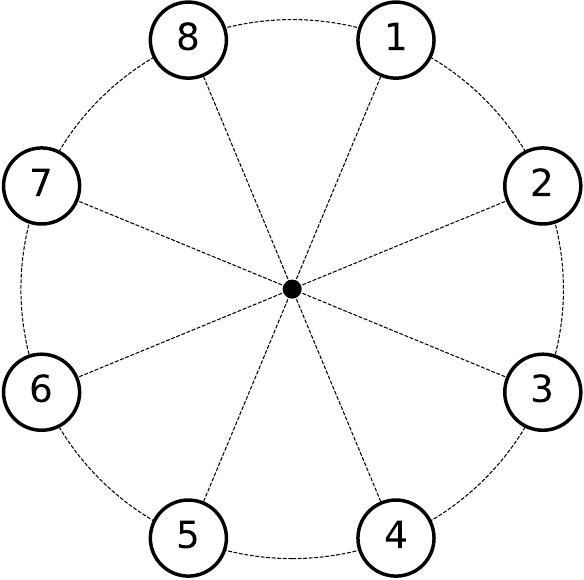}}
		\label{fig:octacopter_model}		
	}
	
	\caption{An illustrative example of a multi-rotor drone and its corresponding system model}
	\label{fig:system_example}
    \end{figure}
	
    Here, we explain some of the main notations to mathematically describe the target system. First of all, we consider a set of indexes of units $\mathcal{U}$ (${U}=\{1,2,...,n\}$) that comprise a \ckngb system. Then, $U$ (or $U_j$) denotes the ($j^\textrm{th}$) subset of $\mathcal{U}$ containing the set of indexes non-failed units (i.e., $U=\{u_1,u_2,...,u_{|U|}\}$ where each $u_l$ corresponds to each non-failed unit index). For example, if we have $\mathcal{U}=\{1,2,3,4\}$ and unit $2$ has been failed, then we have a subset $U=\{1,3,4\}$ where $u_1=1,u_2=3,u_3=4$. Note that $u_l$'s are in an ascending order: $u_1<u_2<\cdots<u_{|U|}$. We also introduce a collection of sets denoted by $\mathbf{U}_k$ that consists of all the $k$-combinations of a unit set $\mathcal{U}$, that is, $\mathbf{U}_k=\{U_1,U_2,...,U_{|\mathbf{U}_k|}\}$ where $|\mathbf{U}_k|=\binom{n}{k}$.
	
    A \ckngb system is said to be operational if at least $k$ among $n$ units are non-failed as well as the system maintains a balance with the non-failed units. Since On-Off control is not a highly sophisticated control mechanism in modern electronic devices, it is reasonable to assume each non-failed unit can be forced up and down. This On-Off mechanism can be utilized to rebalance the system by forcing down some operational units that are harming the system balance. To evaluate the system-wise reliability of such systems, it is necessary to take the unit-wise reliability into account. In this regard, we assume that the system consists of independent and homogeneous units each of which has a stationary survival probability. In short, we assume a constant survival and failure probabilities $r$ and $1-r$ for each unit and for all the planning horizon. 
	
    Below, we summarize the key assumptions considered for the last of this paper. 
    \begin{itemize}
	\item The system consists of independent and homogeneous units each of which has a constant survival and failure probabilities $r$ and $1-r$.
	\item Each non-failed unit is subject to On-Off control. For example, any non-failed operating unit can be forced down for rebalancing the system.
    \end{itemize}
	
    In the next section, we formally define the balance conditions that can be considered regarding the reliability analysis of \ckngb systems.
	
    \section{Balance Conditions} \label{sec:balance}
    We consider three balance conditions for \ckngb systems: Balance Condition I (symmetry, BC1 \cite{HE2016b}), Balance Condition II (proportionality, BC2 \cite{EYK2018}), and the newly proposed Balance Condition III (center of gravity at origin, BC3). Note that just satisfying one of the balance conditions does not necessarily mean that the system is functioning; for a \kng system to be operational, there should be at least $k$ non-failed units as well. Taken together, a \ckngb system is operational when it satisfies at least one of the balance conditions with at least $k$ non-failed units. Although this section introduces all the three balance conditions for explanatory purpose, we want to emphasize that the proposition of BC3 is the contributing part of this paper. Hence, we refer readers to the previous literature \cite{HE2016b,EYK2018,EK2020} for in-depth explanation on BC1 and BC2.
	
    \subsection{BC1: \textit{System is symmetric}} \label{subsec:bc1}
    The first balance condition BC1 is related to the symmetry of the target system. That is, if the system is symmetric in terms of a certain definition, it is considered as balanced. BC1 has been originally suggested to evaluate the circular $k$-out-of-$n$ \textit{pairs}: G balanced systems. Since the system can be regarded as a special case of \ckngb system in which an additional \textit{pair} constraint is considered, BC1 can also be applied to any of the \ckngb systems for evaluating its balance. The following definition states BC1 that is suggested by Hua and Elsayed \cite{HE2016b}.
    \begin{definition}[BC1] \label{def:bc1}
	A \ckngb system is said to be balanced if it is symmetric in the sense that all its operating units are symmetric w.r.t at least a pair of perpendicular axes as well as the number of pairs is an even number.
    \end{definition}
	
    Fig.~\ref{fig:illustrative_example_BC1} depicts illustrative examples that explain BC1. Regarding all the graphical models that will be shown in this paper, white circles express the operating units whereas black-colored circles correspond to the failed or turned-off units. First, the graphical model in Fig.~\ref{fig:ex_BC1_2_axes} describes a subsystem with $U=\{4,5,6,10,11,12\}$. As shown in the figure, it forms a pair of perpendicular axes of symmetry so that is identified to satisfy BC1 hence balanced. In comparison, Fig.~\ref{fig:ex_BC1_3_axes} provides a subsystem with $U=\{4,8,12\}$ that forms three distinct axes of symmetry in which no pair of perpendicular axes is observed. According to definition \ref{def:bc1}, it is identified not to satisfy BC1 hence unbalanced. Later in Section~\ref{subsec:bc2}, we will show that the subsystem in Fig.~\ref{fig:ex_BC1_3_axes} also can be identified to be balanced by considering more generalized balance condition (see Fig.~\ref{fig:ex_BC2_case1}).
	
    \begin{figure}[ht]
	\centering
	\subfloat[][One pair of perpendicular axes]
	{
		\centering\resizebox{0.35\textwidth}{!}{\includegraphics{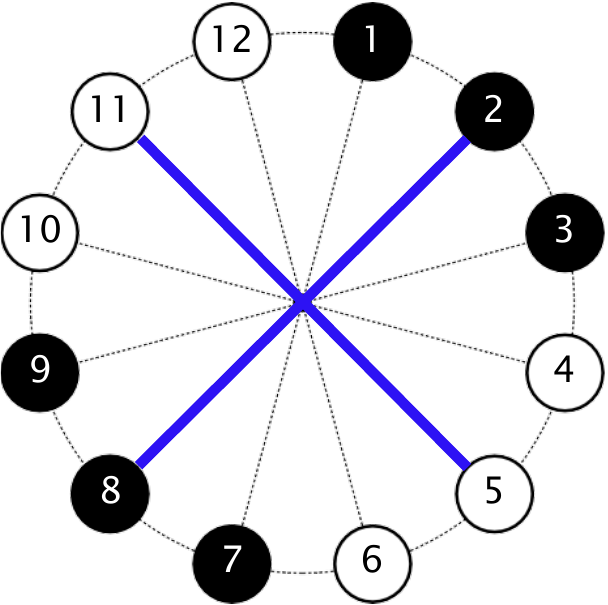}}
		\label{fig:ex_BC1_2_axes}		
	}
	~
	\subfloat[][No pair of perpendicular axes]
	{
		\centering\resizebox{0.35\textwidth}{!}{\includegraphics{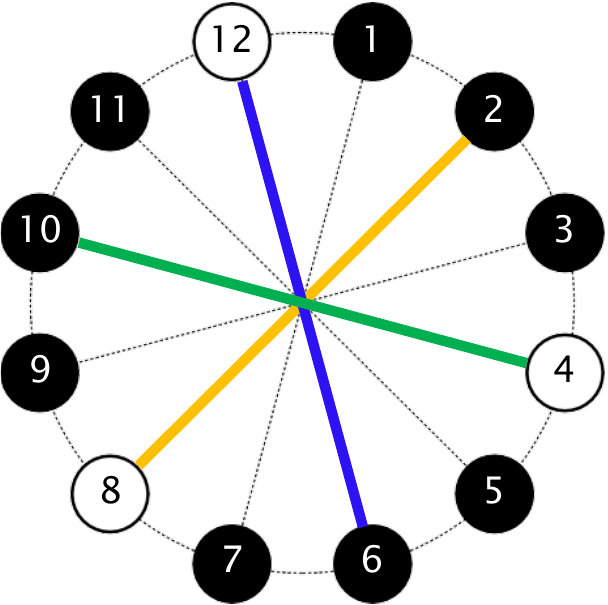}}
		\label{fig:ex_BC1_3_axes}		
	}
	\caption{Illustrative examples for BC1}
	\label{fig:illustrative_example_BC1}
    \end{figure}
	
    To explain the quantitative way of judging whether or not the system satisfies BC1, we introduce some mathematical notations. First, let $d_l$ denote the distance between units $u_{l+1}$ and $u_l$ defined by $d_l=u_{l+1}-u_l$ if $l\neq|U|$ and $d_l=n+u_1-u_{|U|}$ if $l=|U|$, for a subsystem $U=\{u_1,u_2,...,u_{|U|}\}$ where $l=1,2,...,|U|$. We also define $D_U$ be a tuple that collects all the distance enumeration: $D_U=(d_1,d_2,...,d_{|U|})$. Then, let $E^{(l)}$ be the reverse tuple of $D_U$ which starts from $d_l$; $E^{(l)}=(d_l,d_{l-1},...,d_2,d_1,d_{|U|},d_{|U|-1},...,d_{l+1})$. Note that an $E^{(l)}$ that satisfies $E^{(l)}=D_U$ corresponds to an axis of symmetry. By collecting all such $E^{(l)}$'s, we generate a set of reverse tuples denoted by $\mathbf{E}_U^B$. Combined with definition \ref{def:bc1}, the following proposition states a quantitative way of figuring out if a system satisfies BC1.
	
    \begin{proposition}[Endharta \etal \cite{EYK2018}] \label{prop:eyk2018}
    For a \ckngb system with an index set of non-failed units $U$, the axis of symmetry is located between units $u_1$ and $u_{l+1}$ for $l$ such that $E^{(l)}=D_U$. Thus, $|\mathbf{E}_U^B|$ can represent the number of axes of symmetry in the system and the system balance is examined as follows.
    \begin{enumerate}[\indent(a)]
		\item If $|\mathbf{E}_U^B|$ is an even number, the system is symmetric w.r.t. at least a pair of perpendicular axes and balanced.
		\item If $|\mathbf{E}_U^B|$ is an odd number, the system is symmetric w.r.t. certain axes, but not balanced.
		\item If $|\mathbf{E}_U^B|$ is 0, the system is not symmetric, then it is not balanced.
	\end{enumerate}
    \end{proposition}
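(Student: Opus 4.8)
The plan is to translate the purely combinatorial identity $E^{(l)} = D_U$ into the geometric statement ``the operating set $U$ has a reflection axis passing between $u_1$ and $u_{l+1}$,'' and then to read off (a)--(c) from the dihedral structure of the symmetry group of $U$. First I would fix the natural model: place unit $i$ at the point of angle $2\pi i/n$, so that $U$ is a subset of the $n$ equally spaced vertices of a regular $n$-gon. The key preliminary observation is that any reflection mapping $U$ onto itself must be one of the $n$ dihedral axes of the $n$-gon; on the index set $\mathbb{Z}/n\mathbb{Z}$ every such reflection has the form $\sigma_c(i)=c-i \pmod n$, because carrying one vertex angle to another forces the axis angle to be an integer multiple of $\pi/n$. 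I would also record that $\sigma_c$ reverses the cyclic order of vertices, hence reverses the cyclic gap sequence $(d_1,\dots,d_{|U|})$.

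The main step is the correspondence itself. Given $l$, I would consider the reflection $\sigma$ determined by $\sigma(u_1)=u_{l+1}$; since reflections reverse orientation and preserve arc lengths, $\sigma$ must send $u_j$ to $u_{l+2-j}$ (indices taken mod $|U|$) and therefore carry the gap $d_j$ to a gap of length $d_{l+1-j}$. Consequently $\sigma(U)=U$ holds exactly when $d_j=d_{l+1-j}$ for all $j$, which is precisely the entrywise identity $E^{(l)}=D_U$ coming from the definition of the reverse tuple $E^{(l)}$. Because $\sigma$ interchanges $u_1$ and $u_{l+1}$, its axis is the perpendicular bisector of the chord $u_1u_{l+1}$, i.e.\ it passes through the arc-midpoint between $u_1$ and $u_{l+1}$, giving the stated location. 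Conversely, any symmetry axis of $U$ must send $u_1$ to some $u_{l+1}\in U$, so the construction is exhaustive and the map $l\mapsto\sigma$ is injective; hence $\{\,l : E^{(l)}=D_U\,\}$ is in bijection with the set of symmetry axes, which proves that $|\mathbf{E}_U^B|$ equals the number of axes of symmetry.

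I expect the main obstacle to be exactly this step: the index and modular bookkeeping must be done carefully, checking that the alignment $\sigma(u_j)=u_{l+2-j}$ and the induced identity $d_j=d_{l+1-j}$ coincide with the explicit definition of $E^{(l)}$, including the wrap-around gap $d_{|U|}$ and the degenerate case $l=|U|$ (axis through $u_1$ itself). One should also verify the reverse implication, namely that $d_j=d_{l+1-j}$ forces $\sigma$ to land on actual operating units (e.g.\ the instance $j=l$ gives $d_l=d_1$, which is what makes $\sigma(u_2)=u_l$).

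Finally I would extract the trichotomy. If $U$ admits at least one reflection symmetry, its symmetry group is a finite dihedral group: the product of two reflection symmetries is a rotation symmetry, the rotation symmetries form a cyclic group of some order $m$, and the $m$ reflection axes are then equally spaced at angular increments $\pi/m$. Two of these axes are perpendicular if and only if some increment $j\pi/m$ equals $\pi/2$ modulo $\pi$, that is, if and only if $m=|\mathbf{E}_U^B|$ is even. Combining this with Definition~\ref{def:bc1} yields the three cases: when $|\mathbf{E}_U^B|$ is even a perpendicular pair of axes exists, so the symmetry requirement of BC1 is met and the system is balanced, giving (a); when $|\mathbf{E}_U^B|$ is odd the configuration is symmetric about some axes but admits no perpendicular pair, so it is not balanced, giving (b); and when $|\mathbf{E}_U^B|=0$ there is no axis at all, so the system is neither symmetric nor balanced, giving (c).
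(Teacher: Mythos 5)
Your proof is essentially correct, but there is nothing in the paper to compare it against: Proposition~\ref{prop:eyk2018} is imported from Endharta \etal \cite{EYK2018} and stated without proof (the paper's appendices only prove Propositions~\ref{prop:b1b3} and~\ref{prop:b2b3}). So your argument stands as a self-contained justification of a statement the paper takes on citation. The core of your argument is sound: the bijection between indices $l$ with $E^{(l)}=D_U$ and reflections $\sigma$ with $\sigma(u_1)=u_{l+1}$, obtained by matching the reversed gap identity $d_j=d_{l+1-j}$ entrywise against the definition of $E^{(l)}$, and then the classification of finite subgroups of $O(2)$ to get the trichotomy (with $m=|\mathbf{E}_U^B|$ reflections the axes are spaced by $\pi/m$, and a perpendicular pair exists iff $m$ is even). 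Two small points deserve care. First, your preliminary claim that every reflection preserving $U$ is one of the $n$ dihedral axes of the ambient $n$-gon silently assumes the axis passes through the center; this is automatic when $|U|\ge 3$ (three points on a circle determine the circle, so any isometry of $U$ fixes its center), but for $|U|=2$ non-antipodal units the line through the two units is also a reflection axis of the point set, and one must invoke the physical meaning of balance (admissible axes are diameters) to exclude it. Second, in case (a) the hypothesis must be read as \emph{nonzero} even, since $|\mathbf{E}_U^B|=0$ is even but yields no reflection at all; this matches the paper's own Remark~\ref{remark:how_to_examing_bc1}, which states BC1 as ``$|\mathbf{E}_U^B|$ is a nonzero even number,'' and your dihedral argument indeed requires at least one reflection to exist before the group can be called dihedral.
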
 
	
    Simply put, the result of proposition~\ref{prop:eyk2018} leads towards the following remark.
    \begin{remark} \label{remark:how_to_examing_bc1}
        A \ckngb system with an index set of non-failed units $U$ satisfies BC1 if $|\mathbf{E}_U^B|$ is a nonzero even number.
    \end{remark}
	
    In the next subsection, we will explain an expanded balance definition (BC2), which is the main result of the previous study by Endharta \etal \cite{EYK2018}.
	
    \subsection{BC2: \textit{System is spread proportionally}} \label{subsec:bc2}
    The second balance condition BC2 focuses on the proportionality of the operating units in a system. According to BC2, a \ckngb system is considered as balanced if the system is spread proportionally by the operating units. 
    \begin{definition}[BC2] \label{def:bc2}
	A \ckngb system is said to be balanced if the operating units are spread proportionally within the system.
    \end{definition}
	
    To quantify the definition~\ref{def:bc2}, we investigate the angles between two non-consecutively neighboring operating units (say \textit{sector angle}), and then examine the patterns observed from the sector angles. 	Fig.~\ref{fig:illustrative_example_BC2} shows some representative examples of the systems satisfying BC2 and the following remark states the quantitative way of evaluating BC2 proposed by Endharta \etal \cite{EK2020}.
	
    \begin{remark}
    Let $a_s$ be the $s^\textrm{th}$ sector angle between the two non-consecutively neighboring operating units and $N$ be the total number of such angles. Then, a \ckngb system showing at least one of the following patterns is said to be balanced.
	\begin{enumerate}[\indent(a)]
		\item All sector angles $a_1,a_2,...,a_N$ are congruent angles; $a_1=a_2=\cdots=a_N$. \label{remark:BC2a}
		\item When $N$ is an even number, all sector angles $a_1,a_2,...,a_N$ are the opposite angles of one another; $a_s=a_{\frac{N}{2}+s}$ for $s=1,...\frac{N}{2}$. \label{remark:BC2b}
	\end{enumerate}
        \label{remark:BC2}
    \end{remark}

    \begin{figure}[ht]
	\centering
	\subfloat[][$a_1=a_2$]
	{
		\centering\resizebox{0.22\textwidth}{!}{\includegraphics{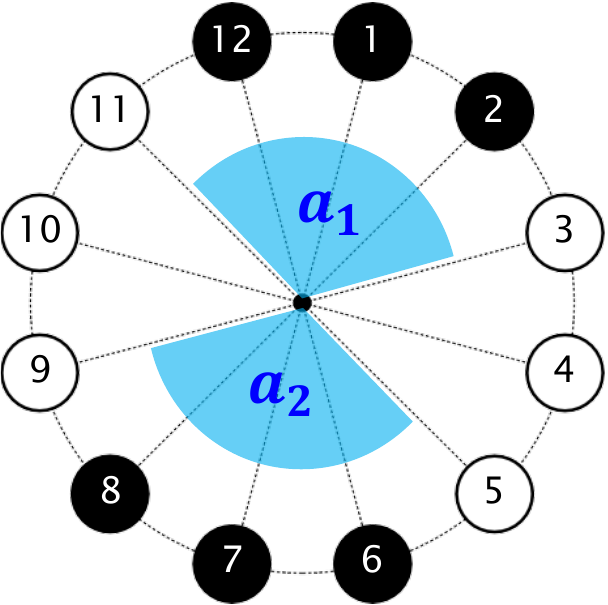}}
		\label{fig:ex_BC2_case1}		
	}
	~
	\subfloat[][$a_1=a_2=a_3$]
	{
		\centering\resizebox{0.22\textwidth}{!}{\includegraphics{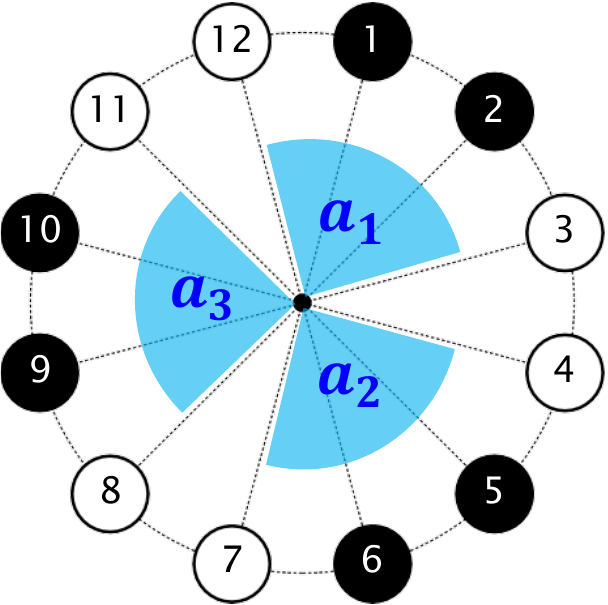}}
		\label{fig:ex_BC2_case2}		
	}
	~
	\subfloat[][$a_1=a_2=a_3=a_4$]
	{
		\centering\resizebox{0.22\textwidth}{!}{\includegraphics{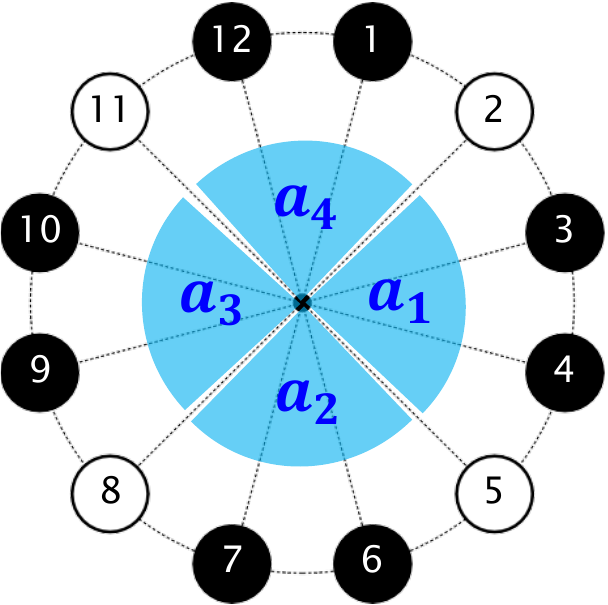}}
		\label{fig:ex_BC2_case3}		
	}
	~
	\subfloat[][$a_1=a_3$ and $a_2=a_4$]
	{
		\centering\resizebox{0.22\textwidth}{!}{\includegraphics{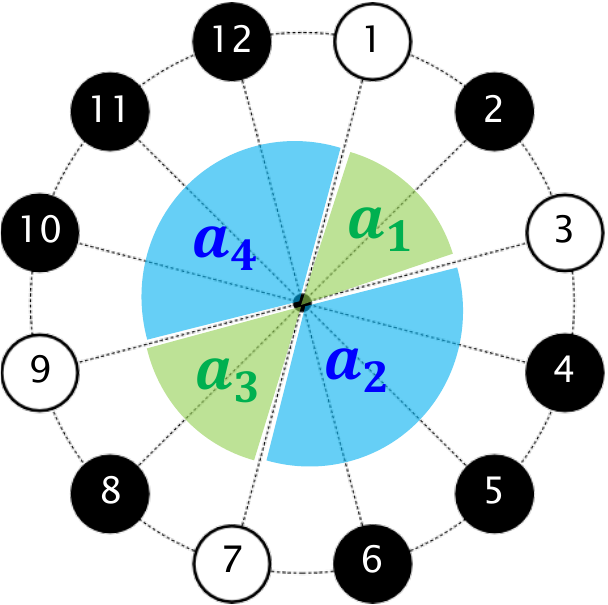}}
		\label{fig:ex_BC2_case4}		
	}
	\caption{Representative examples of the systems satisfying BC2}
	\label{fig:illustrative_example_BC2}
    \end{figure}
	
    Regarding the method of examining BC2 given a subsystem $U$, Endharta and Ko \cite{EK2020} previously found that the existence of more than one reverse distance tuple $E^{(l)}$ such that $E^{(l)}=D_U$ implies the spread proportionality of the system. Hence, the following remark can be used to check if a subsystem satisfies BC2.
    \begin{remark}[Endharta and Ko \cite{EK2020}] \label{remark:how_to_examing_bc2} 
	A \ckngb system with an index set of non-failed units $U$ satisfies BC2 if $|\mathbf{E}_U^B|>1$.
    \end{remark}
	
    Combining remark \ref{remark:how_to_examing_bc2} with remark \ref{remark:how_to_examing_bc1}, we draw a conclusion that a subsystem satisfying BC1 also satisfies BC2 whereas the opposite is not always true. For example, among the graphical models that depicted in Fig.~\ref{fig:illustrative_example_BC2}, only three of them also satisfies BC1 with at least a pair of perpendicular axes of symmetry (see Figs.~\ref{fig:ex_BC2_case1}, \ref{fig:ex_BC2_case3} and \ref{fig:ex_BC2_case4})). As such, BC2 is regarded as a more generalized balance condition than BC1. For another descriptive example, Fig.~\ref{fig:ex_BC2_not_BC1} shows a subsystem which is the same graphical model as previously shown in Fig.~\ref{fig:ex_BC1_3_axes}, which has been identified as unbalanced according to BC1. Now, we notice that it can be regarded as a balanced system by considering BC2. 
    \begin{figure}[ht]
	\centering
	\subfloat[][A subsystem satisfying BC2 but not BC1]
	{
		\centering\resizebox{0.31\textwidth}{!}{\includegraphics{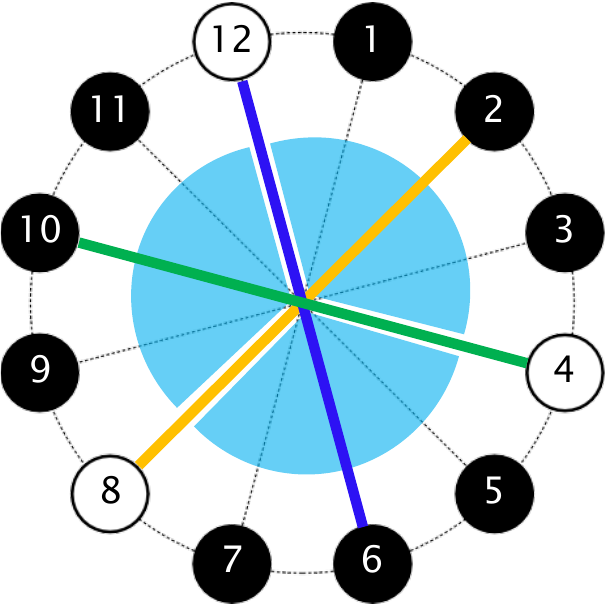}}
		\label{fig:ex_BC2_not_BC1}		
	}
	~
	\subfloat[][A subsystem not satisfying both BC1 and BC2]
	{
		\centering\resizebox{0.31\textwidth}{!}{\includegraphics{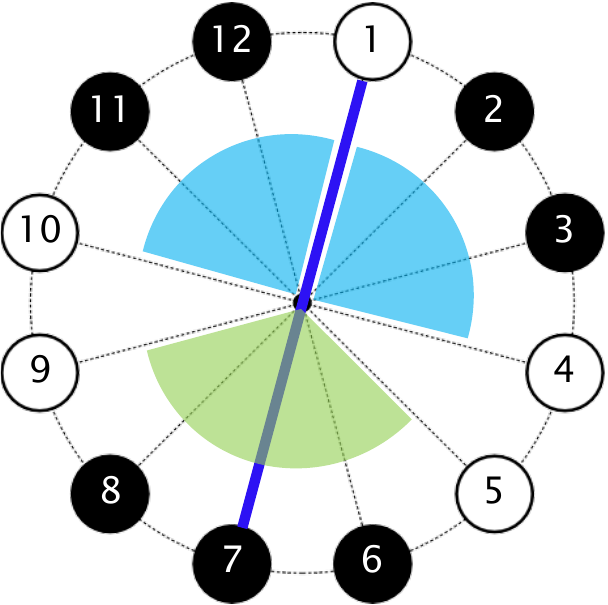}}
		\label{fig:ex_not_BC1_BC2}		
	}
	\caption{Illustrative examples for BC2}
	\label{fig:illustrative_example_BC2_with_BC1}
    \end{figure}
	
    In the next subsection, we will introduce the most generalized balance condition (to the best of our knowledge) BC3 that covers both BC1 and BC2. For example, although the subsystem in Fig.~\ref{fig:ex_not_BC1_BC2} cannot be identified as balanced even if we consider both BC1 and BC2, it will be regarded as balanced by taking BC3 into account (see Fig.~\ref{fig:ex_BC3_case1}).
	
    \subsection{BC3: \textit{System has a center of gravity at origin}} \label{subsec:bc3}
    The motivation of BC3 stems from the following research question: \textit{What is the essence that BC1 and BC2 are trying to evaluate?} In consequence, both BC1 and BC2 can be considered as the conditions representing the states in which a subsystem comprised of only a part of units can maintain the balance. Such condition may correspond to the states in which the center of gravity formed by the operating units is located at the geometric center of the system. To be specific, the \textit{center of gravity} stands for the mean point of an object's or system's weight distribution. By the assumption that we have homogeneous units, the system's direction of total thrust should be in the same direction as the lift force given the center of gravity located at the origin. Based on this simple and intuitive concept, we state the following definition for the new as well as more generalized balance condition.

    \begin{definition}[BC3]
        A \ckngb system is said to be balanced if the system with operating units has a center of gravity at the origin.
    \end{definition}
	
    For example, plots in Fig.~\ref{fig:illustrative_example_BC3} depict two graphical models describing the meaning of BC3. First, Fig.~\ref{fig:ex_BC3_case1} corresponds the case where BC3 is satisfied because the operating units in a subsystem $U=\{1,4,5,9,10\}$ forms the center of gravity at the origin; the blue-colored `x'-marker indicates the location of the center of gravity. Note also that the subsystem is the same as the previous figure Fig.~\ref{fig:ex_not_BC1_BC2} which was examined as unbalanced by considering both BC1 and BC2. On the other hand, Fig.~\ref{fig:ex_BC3_case2} shows another subsystem $U=\{3,6,8,12\}$ in which the center of gravity (the red-colored `x'-marker) is not located at origin, which is identified as unbalanced although the most generalized condition BC3 is considered.

    \begin{figure}[ht]
	\centering
	\subfloat[][A subsystem satisfying BC3 but not both BC1 and BC2]
	{
		\centering\resizebox{0.32\textwidth}{!}{\includegraphics{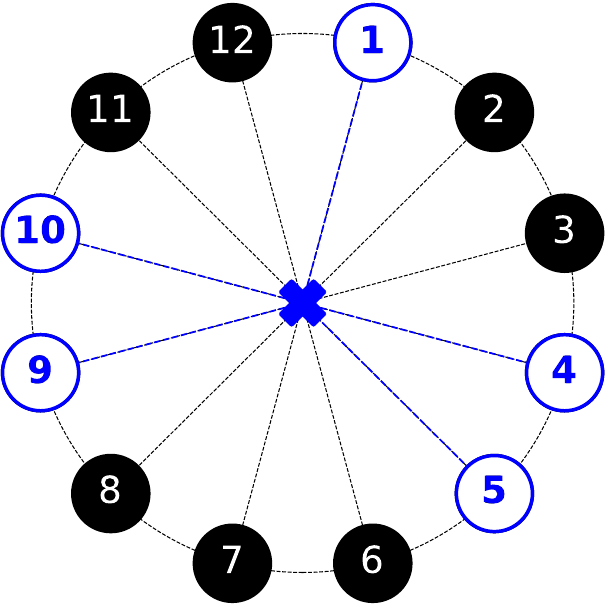}}
		\label{fig:ex_BC3_case1}		
	}
	~
	\subfloat[][A subsystem not satisfying BC1, BC2, and BC3]
	{
		\centering\resizebox{0.32\textwidth}{!}{\includegraphics{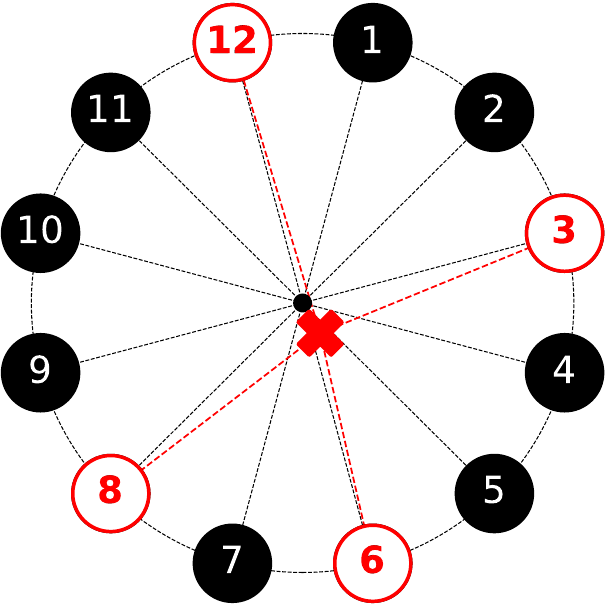}}
		\label{fig:ex_BC3_case2}		
	}
	\caption{Illustrative examples for BC3 (the colored `x'-markers represent the center of gravity for each subsystem)}
	\label{fig:illustrative_example_BC3}
    \end{figure}

	
    To quantitatively examine BC3 for a \ckngb system, we consider a conventional two-dimensional coordinate system with x-axis and y-axis. Without loss of generality, we assume that each unit of the system is located on a unit circle with radius 1. That is, the distance between the origin $(0,0)$ and each unit is $1$ for all units. Again, without loss of generality, we adjust the coordinate of unit $1$ to $(1,0)$ and arrange all the other units counterclockwise in ascending order of unit index such that the x-y coordinate of unit $i$, $(x_i, y_i)$, becomes $\left(\cos\left(\left(i-1\right)\theta\right),\ \sin\left(\left(i-1\right)\theta\right) \right)$ for $i=1,...,n$ where $\theta=2\pi/n$. Then, the coordinate of the center of gravity given a subsystem $U=\{u_1, u_2,...,u_{|U|}\}$ can be simply calculated by $\left( \left(\sum_{u\in U}x_u\right)/|U|, \left(\sum_{u\in U}y_u\right)/|U| \right)$. In summary, the following remark states a quantitative way of examining whether or not a subsystem $U$ satisfies BC3.
	
    \begin{remark} A \ckngb system with an index set of non-failed units $U$ satisfies BC3 if the following equality holds:
	\begin{align}
		\begin{bmatrix}
			0 \\
			0 
		\end{bmatrix}
		= 
		\begin{bmatrix}
			\displaystyle \frac{\sum_{u\in U} \cos\left( \left( u-1 \right)\theta  \right)}{|U|} \\
			\displaystyle \frac{\sum_{u\in U} \sin\left( \left( u-1 \right)\theta  \right)}{|U|}
		\end{bmatrix}
		.
	\end{align}
    \end{remark}
	
    For example, consider the graphical model in Fig.~\ref{fig:ex_BC3_case1} showing a subsystem $U=\{1,4,5,9,10\}$ of a circular $k$-out-of-$12$: G balanced system. Noting that $\theta=2\pi/12=\pi/6$ and $|U|=5$, the center of gravity, say $(\bar{x}_U,\bar{y}_U)$, can be obtained as follows
    \begin{align*}
	\begin{bmatrix}
		\bar{x}_U \\
		\bar{y}_U
	\end{bmatrix}       
	& \equiv 
	\begin{bmatrix}
		\displaystyle \frac{\sum_{u\in U} \cos\left( \left( u-1 \right)\theta  \right)}{|U|} \\
		\displaystyle \frac{\sum_{u\in U} \sin\left( \left( u-1 \right)\theta  \right)}{|U|}
	\end{bmatrix} 
	\\
	& = 
	\begin{bmatrix}
		\displaystyle \frac{\cos\left(0\cdot\frac{\pi}{6}\right)+\cos\left(3\cdot\frac{\pi}{6}\right)+\cos\left(4\cdot\frac{\pi}{6}\right)+\cos\left(8\cdot\frac{\pi}{6}\right)+\cos\left(9\cdot\frac{\pi}{6}\right)}{5} \\
		\displaystyle \frac{\sin\left(0\cdot\frac{\pi}{6}\right)+\sin\left(3\cdot\frac{\pi}{6}\right)+\sin\left(4\cdot\frac{\pi}{6}\right)+\sin\left(8\cdot\frac{\pi}{6}\right)+\sin\left(9\cdot\frac{\pi}{6}\right)}{5}
	\end{bmatrix} \\
	& =
	\begin{bmatrix}
		0 \\
		0
	\end{bmatrix}
	,
    \end{align*}
    which verifies that a subsystem $U=\{1,4,5,9,10\}$ satisfies BC3 hence is identified as balanced.
	
    For another example, consider the graphical model in  Fig.~\ref{fig:ex_BC3_case2} showing a subsystem $U=\{3,6,8,12\}$ with $\theta=\pi/6$ and $|U|=4$. Similarly, the center of gravity can be obtained as follows
    \begin{align*}
	\begin{bmatrix}
		\bar{x}_U \\
		\bar{y}_U
	\end{bmatrix}       
	& = 
	\begin{bmatrix}
		\displaystyle \frac{\cos\left(2\cdot\frac{\pi}{6}\right)+\cos\left(5\cdot\frac{\pi}{6}\right)+\cos\left(7\cdot\frac{\pi}{6}\right)+\cos\left(11\cdot\frac{\pi}{6}\right)}{4} \\
		\displaystyle \frac{\sin\left(2\cdot\frac{\pi}{6}\right)+\sin\left(5\cdot\frac{\pi}{6}\right)+\sin\left(7\cdot\frac{\pi}{6}\right)+\sin\left(11\cdot\frac{\pi}{6}\right)}{4}
	\end{bmatrix} \\
	& \approx
	\begin{bmatrix}
		-0.0915 \\
		0.0915
	\end{bmatrix}
	\neq
	\begin{bmatrix}
		0 \\
		0
	\end{bmatrix}
	,
    \end{align*}
    which concludes that a subsystem $U=\{3,6,8,12\}$ does not satisfy BC3 hence is unbalanced.
	
    \subsection{Relationship between the balance conditions} \label{subsec:relationship}
    In this subsection, we discuss the relationship among the three balance conditions BC1, BC2, and BC3. The main purpose of this investigation is to show that BC3 is the most generalized balance condition. First, we start with the following proposition which shows that BC3 is a more generalized balance condition than BC1.
    \begin{proposition}\label{prop:b1b3}
	BC1 implies BC3.
    \end{proposition}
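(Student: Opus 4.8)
The plan is to translate the combinatorial description of BC1 into a geometric statement about the centroid, exploiting the fact that the units sit at the $n$-th roots of unity on the unit circle centered at the origin. Throughout I identify the plane with $\mathbb{C}$ and write $z_u = \cos\left((u-1)\theta\right) + \mathbf{i}\,\sin\left((u-1)\theta\right)$ with $\theta = 2\pi/n$, so that, by the Remark characterizing BC3, the condition BC3 is equivalent to $\sum_{u\in U} z_u = 0$ (dividing by $|U|$ is harmless). First I would put BC1 into usable form: by Remark~\ref{remark:how_to_examing_bc1} together with Proposition~\ref{prop:eyk2018}, a subsystem $U$ satisfying BC1 has $|\mathbf{E}_U^B|$ a nonzero even number, which exhibits at least one pair of perpendicular axes of reflective symmetry $\ell_1,\ell_2$ of the point set $\{z_u : u\in U\}$.

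Next I would show that each such axis passes through the origin. Let $\sigma_i$ be the reflection across $\ell_i$; it is an isometry mapping $\{z_u\}$ onto itself. Since every $z_u$ lies at distance $1$ from the origin $O$, every image point lies at distance $1$ from $\sigma_i(O)$; as the images again exhaust $\{z_u\}$, all points of $U$ are equidistant from both $O$ and $\sigma_i(O)$. For $|U|\ge 3$ three distinct points of a circle determine it uniquely, forcing $\sigma_i(O)=O$, i.e.\ $O\in\ell_i$; the remaining small cases (an antipodal pair) lie on axes through $O$ directly.

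The crux is then to compose the two reflections. Two reflections across perpendicular lines through a common point compose to the rotation by $\pi$ about that point; since $\ell_1,\ell_2$ meet orthogonally at $O$, the map $\sigma_1\circ\sigma_2$ is the central inversion $z\mapsto -z$, and it preserves $\{z_u\}$. Hence $z\mapsto -z$ is a fixed-point-free involution of the point set (the only fixed point, $O$, is not in $U$) that pairs each $z_u$ with $-z_u\in\{z_u\}$. Summing over these pairs gives $\sum_{u\in U} z_u = \sum_{u\in U}(-z_u)$, so $2\sum_{u\in U} z_u = 0$ and therefore $\sum_{u\in U} z_u = 0$, which is exactly BC3.

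The main obstacle is the middle step: faithfully translating the purely combinatorial condition ``$|\mathbf{E}_U^B|$ is even'' into the geometric statement that $U$ is invariant under the central inversion about the origin. In particular one must verify that the two symmetry axes really pass through the origin and really are orthogonal, rather than merely being concurrent at some other point $p$ (where the same argument would instead place the centroid at $p$ rather than at $O$). Once invariance of $\{z_u\}$ under $z\mapsto -z$ is secured, the antipodal pairing makes the vanishing of the centroid immediate, and the implication BC1 $\Rightarrow$ BC3 follows.
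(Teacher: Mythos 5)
Your proof is correct, and its engine is the same as the paper's: BC1 forces the set of operating units to be closed under the antipodal map $z\mapsto -z$, antipodal pairs have centroid at the origin, and hence so does the whole configuration. The difference is in how that closure is obtained. The paper's Appendix~A simply asserts it: after invoking Definition~\ref{def:bc1} it declares that for any operating unit $u$ there is another operating unit $u'$ at angle $\pi$ from it, and then verifies by a trigonometric identity that each such pair has centroid $(0,0)$. You instead \emph{prove} the closure, in two steps: any axis of reflective symmetry of a set of at least three points on the unit circle must pass through the center (two distinct circles of radius one share at most two points), and the composition of two perpendicular reflections through the origin is the central inversion, which therefore permutes $\{z_u\}$. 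This is a genuine strengthening rather than a detour: Definition~\ref{def:bc1} by itself does not say the symmetry axes pass through the center, so the inference the paper makes in one breath is exactly the nontrivial translation, and your isometry argument supplies the justification the paper elides (in the paper's framework the axes are diameters only because the combinatorial construction of $\mathbf{E}_U^B$ in Proposition~\ref{prop:eyk2018} builds them that way). The one loose end is your parenthetical treatment of the small case: you take for granted that the only two-unit subsystem satisfying BC1 is an antipodal pair, which is true but deserves the one-line check that $|\mathbf{E}_U^B|$ being a nonzero even number forces $d_1=d_2$, i.e., $u_2-u_1=n/2$.
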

    \begin{proof}
	See Appendix \ref{app:A}.
    \end{proof}
	
    Next, the following proposition states that BC3 is a more generalized balance condition that BC2. Recall that we have observed an example of this relationship graphically by comparing Fig.~\ref{fig:ex_not_BC1_BC2} with Fig.~\ref{fig:ex_BC3_case1}.
    \begin{proposition}\label{prop:b2b3}
	BC2 implies BC3.
    \end{proposition}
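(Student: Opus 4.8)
The plan is to reduce BC2 to a single symmetry principle: a set of points lying on a circle centered at the origin that possesses two or more axes of reflective symmetry is necessarily invariant under a nontrivial rotation about the origin, and the centroid of any such set must be the origin itself. Since BC3 is precisely the statement that the centroid $(\bar{x}_U,\bar{y}_U)$ sits at the origin, establishing this principle finishes the proof.

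First I would translate the hypothesis into geometry. By Remark~\ref{remark:how_to_examing_bc2}, BC2 is equivalent to $|\mathbf{E}_U^B|>1$, and by Proposition~\ref{prop:eyk2018} this says that the operating units of $U$, regarded as points on the unit circle, admit at least two distinct axes of symmetry. Because every unit lies on a circle centered at the origin, each such axis must pass through the origin: a reflection that carries the circle onto itself is forced to fix its center. Next I would compose two of these distinct reflections. The composition of the reflections across two distinct lines through the origin is a rotation $\rho$ about the origin through twice the angle between the lines, and since the lines are distinct this angle lies strictly between $0$ and $\pi$, so $\rho$ is nontrivial. Each reflection maps the point set $U$ onto itself, hence so does $\rho$.

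To conclude, I would use that the centroid $(\bar{x}_U,\bar{y}_U)$ is the arithmetic mean of the position vectors of the operating units. Any isometry that permutes those units sends their centroid to the centroid of the image set, i.e.\ back to itself; therefore $\rho$ fixes $(\bar{x}_U,\bar{y}_U)$. A nontrivial rotation about the origin fixes only the origin, so $(\bar{x}_U,\bar{y}_U)=(0,0)$, which is exactly BC3.

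I expect the main obstacle to be the bookkeeping that rigorously connects the combinatorial condition $|\mathbf{E}_U^B|>1$ to \emph{two genuinely distinct} axes through the origin, so that the composed rotation is certifiably nontrivial rather than the identity. If that link reads as too terse, I would instead argue directly from the two patterns of Remark~\ref{remark:BC2}: under pattern~\ref{remark:BC2a} the units are equally spaced, so the centroid is proportional to a complete sum of $|U|$-th roots of unity and vanishes; under pattern~\ref{remark:BC2b} the equalities $a_s=a_{N/2+s}$ force every block of $N/2$ consecutive gaps to sum to $n/2$, which places the units in antipodal pairs $\{u_s,u_{s+N/2}\}$ whose contributions cancel termwise. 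Either route places the centroid at the origin and yields BC3.
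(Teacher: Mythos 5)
Your primary route has a genuine gap at its very first step: it uses Remark~\ref{remark:how_to_examing_bc2} in the direction the paper never establishes. That remark states only that $|\mathbf{E}_U^B|>1$ is \emph{sufficient} for BC2; your argument needs the converse, BC2 $\Rightarrow|\mathbf{E}_U^B|>1$, because BC2 is the hypothesis of the proposition. The paper's own proof accordingly starts from Definition~\ref{def:bc2} as quantified by the sector-angle patterns of Remark~\ref{remark:BC2}, not from $\mathbf{E}_U^B$. Worse, the converse you need is false. Take $n=36$ and $U=\{1,3,10,19,21,28\}$, so that $D_U=(2,7,9,2,7,9)$ and the sector angles are these distances times $2\pi/36$. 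Pattern~(\ref{remark:BC2b}) of Remark~\ref{remark:BC2} holds ($a_1=a_4$, $a_2=a_5$, $a_3=a_6$ with $N=6$), so this subsystem satisfies BC2; yet every reverse tuple $E^{(l)}$ is a cyclic shift of $(2,9,7,2,9,7)$, none of which equals $D_U$, so $|\mathbf{E}_U^B|=0$ and the configuration has not a single axis of reflective symmetry. For such a set the two reflections you want to compose do not exist, and the rotation $\rho$ cannot be manufactured. (Note that the obstacle you yourself flagged --- whether distinct indices $l$ give genuinely distinct axes --- is not the problem; that part is fine. The problem is that BC2 does not deliver any axes at all.) The proposition itself survives on this example: the set is invariant under rotation by $\pi$ and its centroid is the origin --- but that is exactly the point: BC2 guarantees rotational symmetry, not reflective symmetry.

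This tells you what to keep and what to discard, and your fallback paragraph is in fact the correct proof. Your key lemma --- a point set invariant under a nontrivial rotation about the origin has centroid at the origin, by equivariance of centroids under isometries --- is sound; only the derivation of the rotation from two reflections must go. Both patterns of Remark~\ref{remark:BC2} hand you a nontrivial rotation directly: under pattern~(\ref{remark:BC2b}), the relations $a_s=a_{N/2+s}$ force each block of $N/2$ consecutive sector angles to sum to $\pi$, giving invariance under rotation by $\pi$ (equivalently, your antipodal pairing); under pattern~(\ref{remark:BC2a}), the proportionally spread configuration is invariant under rotation by $2\pi/N$. Comparing with Appendix~\ref{app:B}: for pattern~(\ref{remark:BC2a}) the paper aggregates each subgroup of units lying between consecutive sector angles into a virtual unit at that subgroup's center of gravity and then annihilates the sum of the resulting equally spaced points with the trigonometric summation lemma quoted from Knapp --- this is your roots-of-unity computation, except that you should include the same aggregation step, since the sector angles separate \emph{groups} of adjacent units rather than single units. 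For pattern~(\ref{remark:BC2b}) the paper simply declares the configuration symmetric in the sense of BC1 and invokes Proposition~\ref{prop:b1b3}; the example above defeats that reduction (it satisfies pattern~(\ref{remark:BC2b}) yet has no axis of symmetry, hence cannot meet Definition~\ref{def:bc1}), so your direct antipodal-cancellation argument is actually more rigorous than the paper's own handling of this case. Rewritten so that the rotation is obtained directly from the two patterns, your fixed-point argument yields a unified, computation-free proof that is cleaner than the one in the paper.
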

    \begin{proof}
	See Appendix \ref{app:B}.
    \end{proof}
	
    Simply combining the propositions \ref{prop:b1b3} and \ref{prop:b2b3}, we have the following conclusion.
    \begin{corollary}\label{cor:b3b1}
	BC3 is the most generalized balance condition among BC1, BC2, and BC3.
    \end{corollary}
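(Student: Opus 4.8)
The plan is to treat each balance condition as a predicate on subsystems and to read ``more generalized'' as set containment: a condition $A$ is more generalized than a condition $B$ precisely when the collection of subsystems satisfying $B$ is contained in the collection satisfying $A$. Under this reading the corollary is an immediate consequence of the two preceding propositions, so the main work is logical bookkeeping rather than any new computation.

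First I would record that Proposition~\ref{prop:b1b3} supplies BC1 $\Rightarrow$ BC3 and Proposition~\ref{prop:b2b3} supplies BC2 $\Rightarrow$ BC3. Writing $\mathcal{B}_1,\mathcal{B}_2,\mathcal{B}_3$ for the families of subsystems $U$ that satisfy BC1, BC2, and BC3 respectively, these two implications say exactly that $\mathcal{B}_1\subseteq\mathcal{B}_3$ and $\mathcal{B}_2\subseteq\mathcal{B}_3$. Hence $\mathcal{B}_1\cup\mathcal{B}_2\subseteq\mathcal{B}_3$, i.e. every subsystem that any one of the three conditions declares balanced is also declared balanced by BC3. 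This already establishes that BC3 is at least as generalized as either of the other two conditions.

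Next, to argue that BC3 is \emph{strictly} the most generalized --- not merely tied with BC1 or BC2 --- I would exhibit a witness subsystem lying in $\mathcal{B}_3$ but outside $\mathcal{B}_1\cup\mathcal{B}_2$. The paper already provides one: the subsystem $U=\{1,4,5,9,10\}$ of Fig.~\ref{fig:ex_BC3_case1}, whose center of gravity was computed to sit at the origin and which therefore satisfies BC3, while the identical configuration appearing in Fig.~\ref{fig:ex_not_BC1_BC2} was shown to violate both BC1 and BC2. Thus $\mathcal{B}_3\setminus(\mathcal{B}_1\cup\mathcal{B}_2)\neq\varnothing$, so the containments are proper and BC3 is strictly the broadest of the three.

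The only point requiring care --- and it is conceptual rather than technical --- is fixing the meaning of ``most generalized'' before invoking the propositions, since the statement is phrased informally. Once that is pinned down as inclusion of the balanced families, I expect no genuine obstacle: the two propositions deliver the containment and the existing figures deliver the strictness, with no calculation needed beyond citing the already-verified center-of-gravity evaluation for $U=\{1,4,5,9,10\}$.
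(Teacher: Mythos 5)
Your proposal is correct and takes essentially the same route as the paper, whose entire proof is the one-line citation of Propositions~\ref{prop:b1b3} and~\ref{prop:b2b3}; your set-containment reading of ``more generalized'' is exactly the intended one. Your additional strictness argument --- exhibiting $U=\{1,4,5,9,10\}$ as a witness in $\mathcal{B}_3\setminus(\mathcal{B}_1\cup\mathcal{B}_2)$ --- goes beyond what the paper proves, but it is sound, since that configuration's center of gravity is computed to be the origin in Section~\ref{subsec:bc3} while the same configuration is identified in Section~\ref{subsec:bc2} as violating both BC1 and BC2.
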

    \begin{proof}
	By propositions \ref{prop:b1b3} and \ref{prop:b2b3}.
    \end{proof}
	
    In the next section, we explain how to evaluate the system reliability based on the well-known minimum tie-set method considering the balance conditions discussed so far.
	
    \section{Reliability Evaluation}\label{sec:reliability}
    The reliability of \ckngb systems can be evaluated considering the balance conditions. We apply the minimum tie-set method (also known as minimal path set method) \cite{ELSAYED2021}. That is, we interpret the system as a parallel system consists of minimum tie-sets and find the probability that at least one minimum tie-set is operational. In the following two subsections, We will introduce the notion of minimum tie-sets in the context of \ckngb systems and then explain how to evaluate the system reliability. 
	
    \subsection{Minimum tie-sets}
    A \textit{tie-set} (also known as \textit{path set}) is a subset of units in the system which by operating ensures the system is functioning. Hence, a tie-set for a \ckngb system should include at least $k$ elements and satisfy one of the balance conditions. Among the ordinary tie-sets, the \textit{minimum tie-sets} (also known as \textit{minimal path sets}) are the tie-sets that cannot be reduced without losing their property as a tie-set. The minimum tie-set plays an important role when we evaluate the system reliability since the system can be interpreted as a parallel structure of minimum tie-sets \cite{ELSAYED2021}. 
	
    To examine the minimality of a tie-set, we need to check its inclusion relationship with other tie-sets; a tie-set which is a superset of another tie-set cannot be minimal. The flowchart in Fig.~\ref{fig:flowchart_minimum_tie_sets_enumeration} summarizes the procedure to find all the minimum tie-sets in a \ckngb system. Note that the procedure applies for all the balance definitions by changing the balance condition checking step.
    \begin{figure}
	\centering   
	\includegraphics[width=0.85\linewidth]{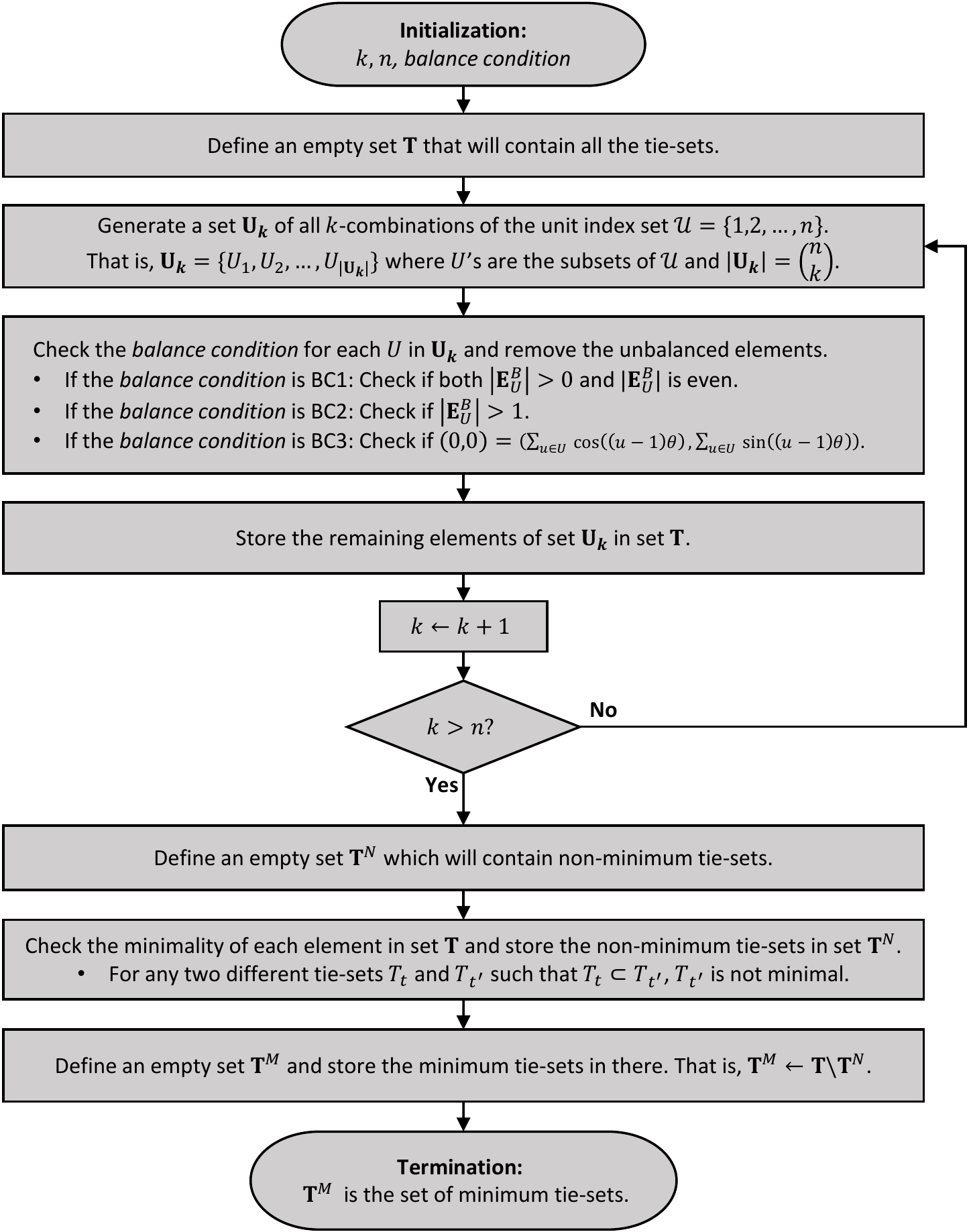}  
	\caption{Procedure for enumerating all the minimum tie-sets} 
	\label{fig:flowchart_minimum_tie_sets_enumeration}
    \end{figure}

    \subsection{System reliability}
    In this subsection, we explain how to evaluate the system reliability using the set of minimum tie-sets. Let $X_i$ be the binary state variable of unit $i$; $X_i=1$ if unit $i$ is functioning, $X_i=0$ otherwise. Collecting the states of all units, we let $\mathbf{X}$ be the system state vector; $\mathbf{X}=[X_1,X_2,...,X_n]$. Then, we define the system structure function $\phi(\cdot)$ of $\mathbf{X}$ as follows:
	\begin{align}
	\phi(\mathbf{X}) \equiv 1-\prod_{T^M \in \mathbf{T}^M}\left( 1-\prod_{u\in T^M} X_{u} \right). \nonumber
	\end{align}
	By the above definition, $\phi(\mathbf{X})$ becomes an indicator random variable that is 1 if $\mathbf{X}$ corresponds to an operational system and 0 otherwise. 
	
    Using the system structure function, the system reliability $R_S$ can be directly derived by calculating $\mathbb{P}\left[\phi(\mathbf{X})=1\right]$ which is equal to $\mathbb{E}\left[\phi(\mathbf{X})\right]$ due to the property of indicator random variable. Since the probability that a unit is functioning properly is a constant $r$ and all units are assumed to be homogeneous, we have $\mathbb{E}[X_i]=r$ for all $i$. Together with the assumption of independent units, $R_S$ is obtained by
    \begin{align}
	R_S = \mathbb{E}\left[\phi(\mathbf{X})\right] 
	& = \mathbb{E}\left[ 1-\prod_{T_M \in \mathbf{T}^M}\left( 1-\prod_{u\in T^M} X_{u} \right) \right] \quad\textrm{(by the definition of reliability)} \nonumber \\
	& = 1-\prod_{T^M \in \mathbf{T}^M}\left( 1-\prod_{u\in T^M} \mathbb{E}\left[ X_{u} \right] \right) \quad\textrm{(by independence)} \nonumber \\
	& = 1-\prod_{T^M \in \mathbf{T}^M}\left( 1-\prod_{u\in T^M} r \right)\quad\textrm{(by homogeneity)} \nonumber  \\
	& = 1-\prod_{T^M \in \mathbf{T}^M}\left( 1-r^{|T^M|} \right). \label{eq:system_reliability}
    \end{align}    

    From the last Eq. (\ref{eq:system_reliability}) for the system reliability, calculating the value of $R_S$ is straightforward once the set of minimum tie-sets $\mathbf{T}^M$ is acquired by enumeration. Next, we provide a specific example of system reliability evaluation for descriptive purpose.
	
    \subsection{A descriptive numerical example}
    Consider a circular $4$-out-of-$12$: G balanced system. Following the procedures in Fig.~\ref{fig:flowchart_minimum_tie_sets_enumeration}, we enumerate all the minimum-tie sets according to each balance definition and visualize them in Fig.~\ref{fig:MTS_graphical_enumeration}. As shown in the figure, we have 15 minimum tie-sets considering only BC1, 19 minimum tie-sets considering BC2, and 31 minimum tie-sets considering BC3, for a circular $4$-out-of-$12$: G balanced system. Since the Eq. (\ref{eq:system_reliability}) implies that the system reliability $R_S$ is proportional to the number of minimum tie-sets $|\mathbf{T}^M|$, the increment of the number of minimum tie-sets will result in the enhancement of the system reliability. To better understand the systems with different parameters, Table \ref{table:num_MTS_comparison} compares the number of minimum tie-sets for various combinations of $(k,n)$ for each balance condition. Although not all the systems show the difference in the number of minimum tie-sets between BC3 and BC2, we observe significant differences (larger than 10) for some systems such as $4$-out-of-$12$ ($\textrm{diff}=12$), $6$-out-of-$12$ ($\textrm{diff}=21$), $6$-out-of-$14$ ($\textrm{diff}=14$), and $8$-out-of-$14$ ($\textrm{diff}=14$). 
	
    \begin{figure}
	\centering   
	\includegraphics[width=0.9\linewidth]{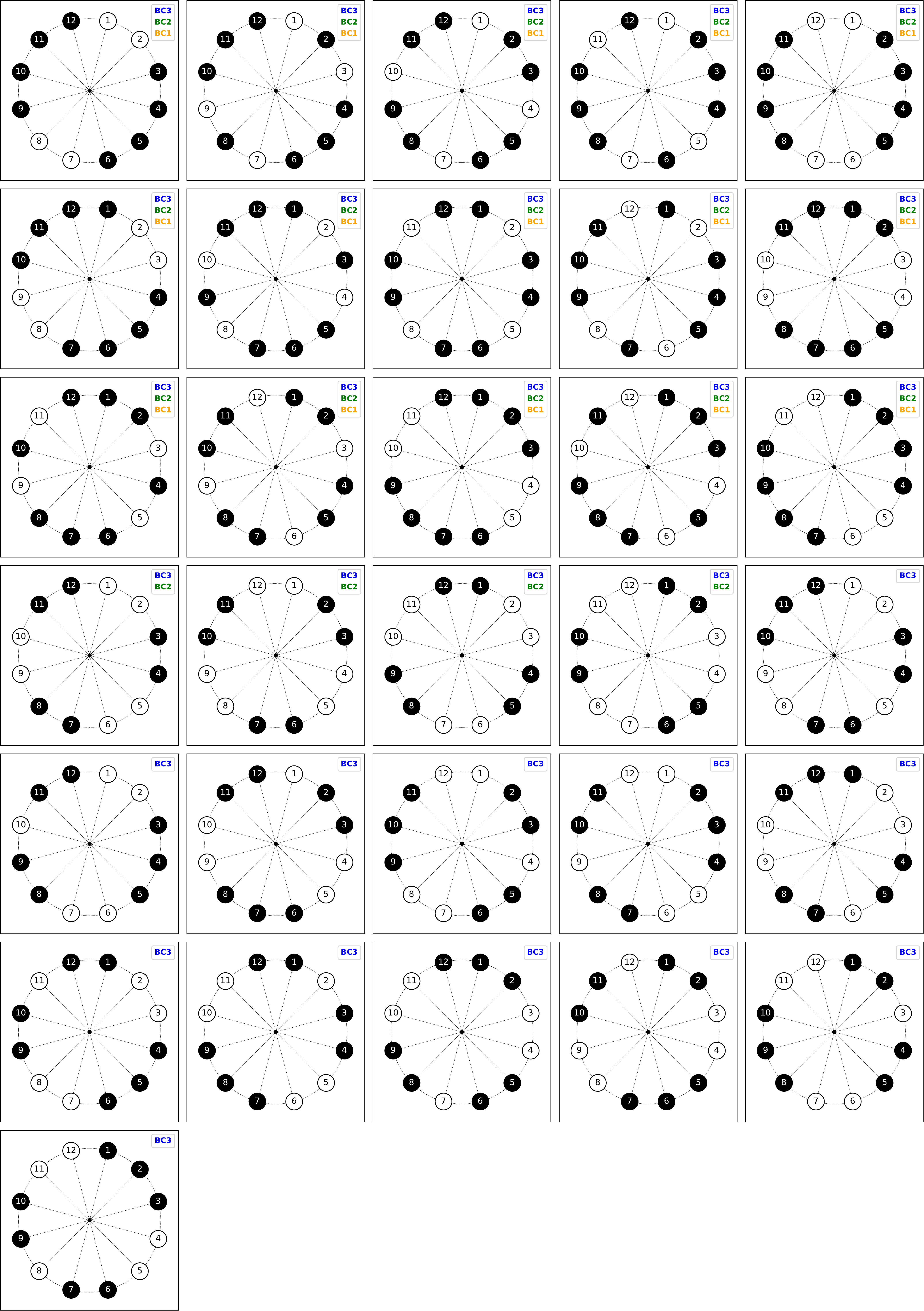}  
	\caption{Enumeration of all the graphical models of minimum tie-sets in a circular $4$-out-of-$12$: G balanced system (each legend exhibits the balance conditions that the corresponding minimum tie-set satisfies)} 
	\label{fig:MTS_graphical_enumeration}
    \end{figure}   	

	In the next section, we provide extensive numerical studies on the reliability enhancement effect resulted from introducing the new balance condition BC3.

    \begin{table}[H]
	\centering
	\caption{Comparison of the number of minimum tie-sets considering BC1, BC2, and BC3}
	\label{table:num_MTS_comparison}
	\resizebox{0.9\textwidth}{!}{%
		\begin{tblr}{
				colspec = {Q[108]Q[108]Q[117]Q[117]Q[117]Q[187]Q[187]},
				cells = {c},
				cell{1}{1} = {c=2}{0.2\linewidth},
				cell{1}{3} = {c=3}{0.351\linewidth},
				cell{1}{6} = {c=2}{0.216\linewidth},
				cell{3}{2} = {r=2}{},
				cell{5}{2} = {r=3}{},
				cell{8}{2} = {r=4}{},
				cell{12}{2} = {r=5}{},
				cell{17}{2} = {r=6}{},
				vlines,
				hline{1-3,5,8,12,17,23} = {-}{},
				hline{4,6-7,9-11,13-16,18-22} = {1,3-7}{},
			}
			\textbf{System Parameters} &  & \textbf{Number of Minimum Tie-sets} &  &  & \textbf{Difference} & \\
			$k$ & $n$ & BC1 & BC2 & BC3 & BC2$-$BC1 & \textbf{BC3$-$BC2}\\
			2 & 6 & 3 & 5 & 5 & 2 & 0\\
			4 &  & 3 & 3 & 3 & 0 & 0\\
			2 & 8 & 4 & 4 & 4 & 0 & 0\\
			4 &  & 6 & 6 & 6 & 0 & 0\\
			6 &  & 4 & 4 & 4 & 0 & 0\\
			2 & 10 & 5 & 7 & 7 & 2 & 0\\
			4 &  & 10 & 12 & 12 & 2 & 0\\
			6 &  & 10 & 10 & 10 & 0 & 0\\
			8 &  & 5 & 5 & 5 & 0 & 0\\
			2 & 12 & 6 & 10 & 10 & 4 & 0\\
			4 &  & 15 & 19 & 31 & 4 & \textbf{\underline{12}}\\
			6 &  & 11 & 15 & 36 & 4 & \textbf{\underline{21}}\\
			8 &  & 15 & 15 & 19 & 0 & \textbf{\underline{4}}\\
			10 &  & 6 & 6 & 6 & 0 & 0\\
			2 & 14 & 7 & 9 & 9 & 2 & 0\\
			4 &  & 21 & 23 & 23 & 2 & 0\\
			6 &  & 21 & 23 & 37 & 2 & \textbf{\underline{14}}\\
			8 &  & 21 & 21 & 35 & 0 & \textbf{\underline{14}}\\
			10 &  & 21 & 21 & 21 & 0 & 0\\
			12 &  & 7 & 7 & 7 & 0 & 0
		\end{tblr}
	}
    \end{table}
	
    \section{Numerical Analysis} \label{sec:numerical}
    We investigate the reliability of \ckngb systems with various system parameters $k$ (minimum number of non-failed units for working system), $n$ (number of units in the system), and $r$ (unit reliability), for different balance condition consideration. Though the main purpose of this section is to show the reliability enhancement effect thanks to the new balance definition BC3 compared to BC2, we include the BC1's results together in order to provide richer experimental data that helps understanding the gradual improvement from BC1 to BC2 to BC3.
	
    \subsection{Reliability comparison for the systems with various parameters}
    Fig.~\ref{fig:reliability_comparison_type_1} depicts the effect of the unit reliability on the system reliability for each balance condition (say $R_S^{\textrm{BC1}}$, $R_S^{\textrm{BC2}}$, and $R_S^{\textrm{BC3}}$) for $k$-out-of-12 (three plots in upper row) and $k$-out-of-14 (another three plots in lower row) systems where $k=4,6,8$. The first important but anticipatory observation is that considering more generalized balance condition consistently improves the system reliability for all the systems: $R_S^{\textrm{BC3}}\ge R_S^{\textrm{BC2}}\ge R_S^{\textrm{BC1}}$ for any $r$ in $[0,1]$. For some systems such as $6$-out-of-$12$, the level of improvement is quite significant so that introducing the new balance condition seems to be promising in terms of operating \ckngb systems. Second, the system reliability shows definite ``S''-shaped curve regardless of the system parameters. Together with the first observation, it is  implied that the effect of unit-wise reliability enhancement on the system reliability is stronger in the middle of $r$'s range $[0,1]$ and goes weaker as it approaches extreme values. 
    \begin{figure}[H]
	\centering
	\subfloat[][$4$-out-of-$12$]
	{
		\centering\resizebox{0.31\textwidth}{!}{\includegraphics{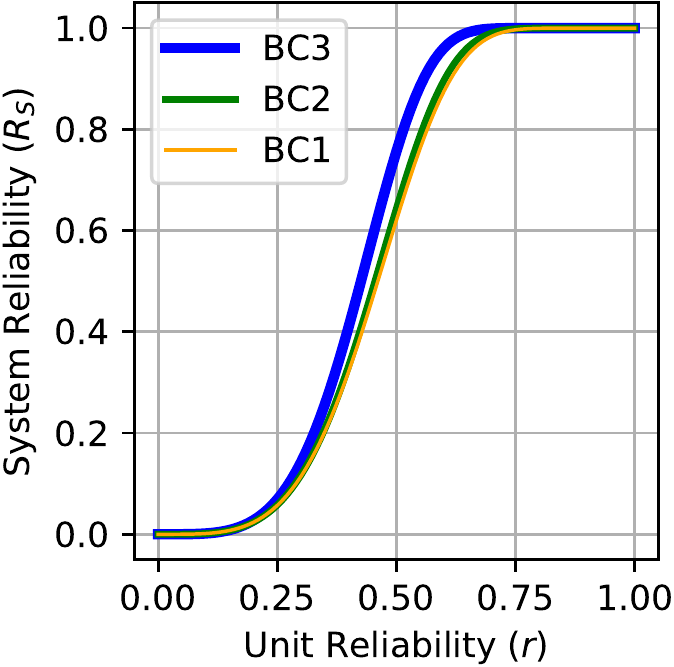}}
		\label{fig:type1_k_4_n_12}		
	}
	~
	\subfloat[][$6$-out-of-$12$]
	{
		\centering\resizebox{0.31\textwidth}{!}{\includegraphics{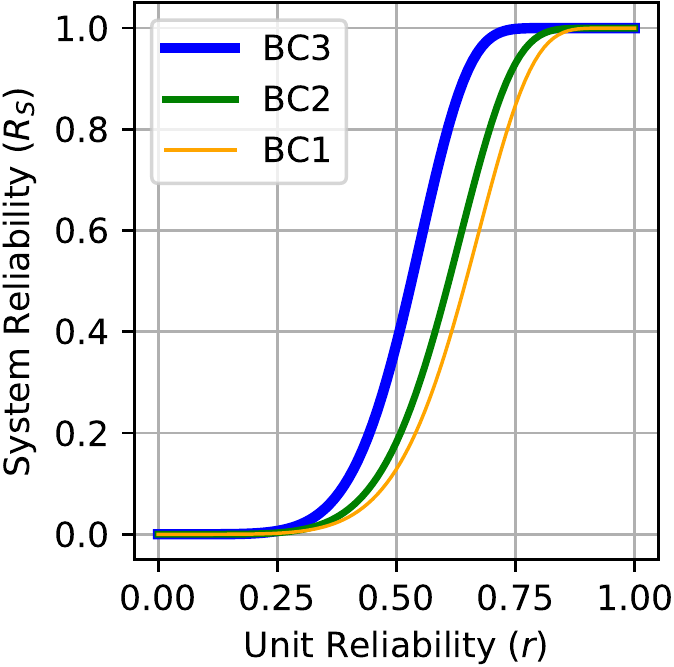}}
		\label{fig:type1_k_6_n_12}		
	}
	~
	\subfloat[][$8$-out-of-$12$]
	{
		\centering\resizebox{0.31\textwidth}{!}{\includegraphics{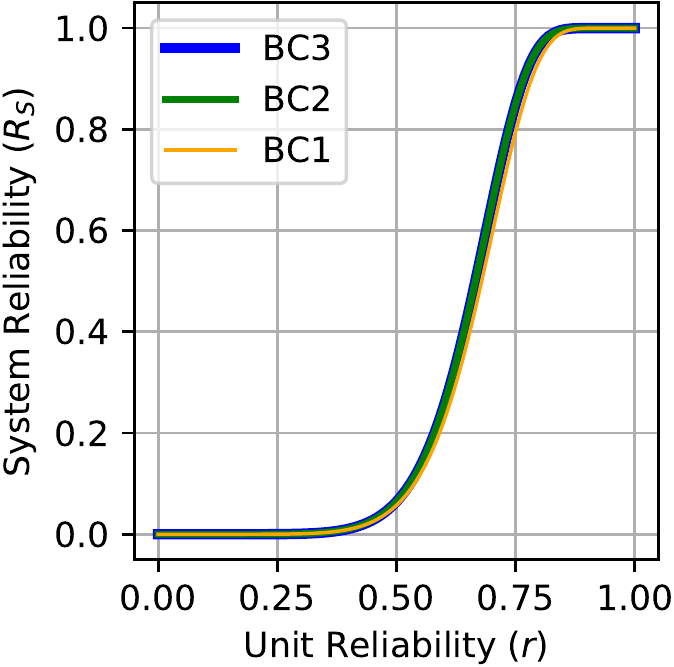}}
		\label{fig:type1_k_8_n_12}		
	}
	
	\subfloat[][$4$-out-of-$14$]
	{
		\centering\resizebox{0.31\textwidth}{!}{\includegraphics{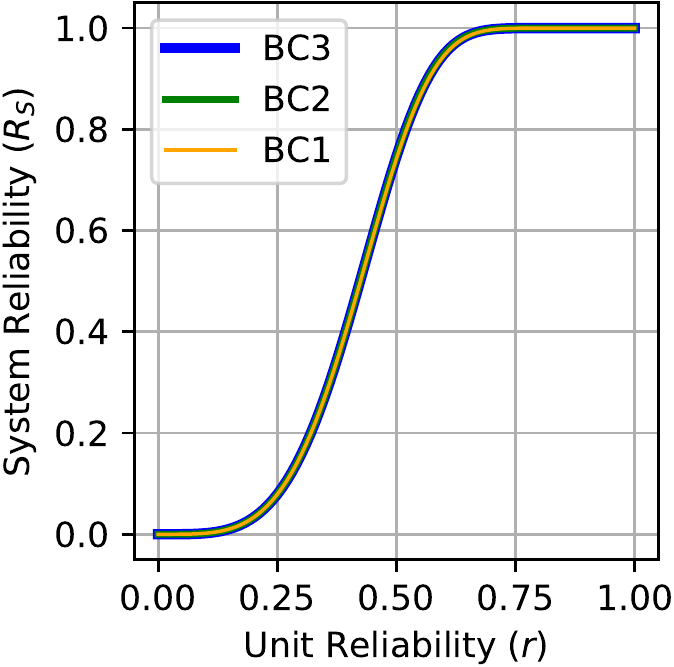}}
		\label{fig:type1_k_4_n_14}		
	}
	~
	\subfloat[][$6$-out-of-$14$]
	{
		\centering\resizebox{0.31\textwidth}{!}{\includegraphics{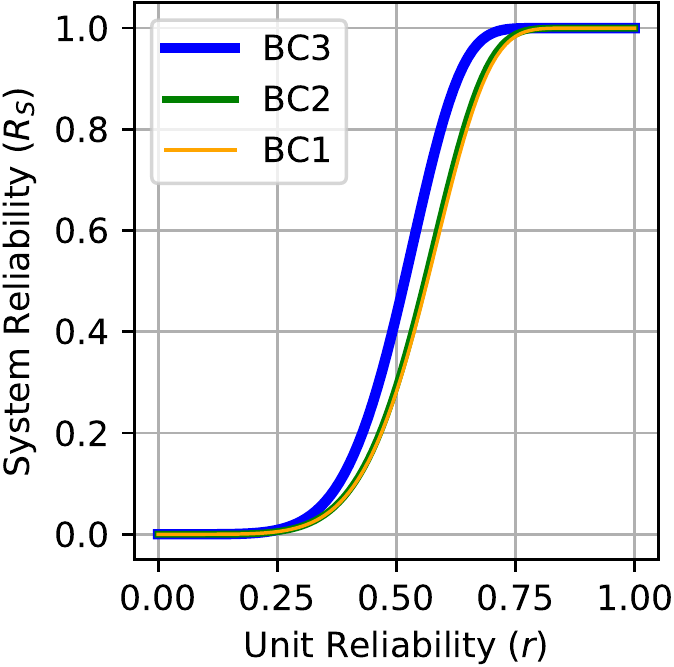}}
		\label{fig:type1_k_6_n_14}		
	}
	~
	\subfloat[][$8$-out-of-$14$]
	{
		\centering\resizebox{0.31\textwidth}{!}{\includegraphics{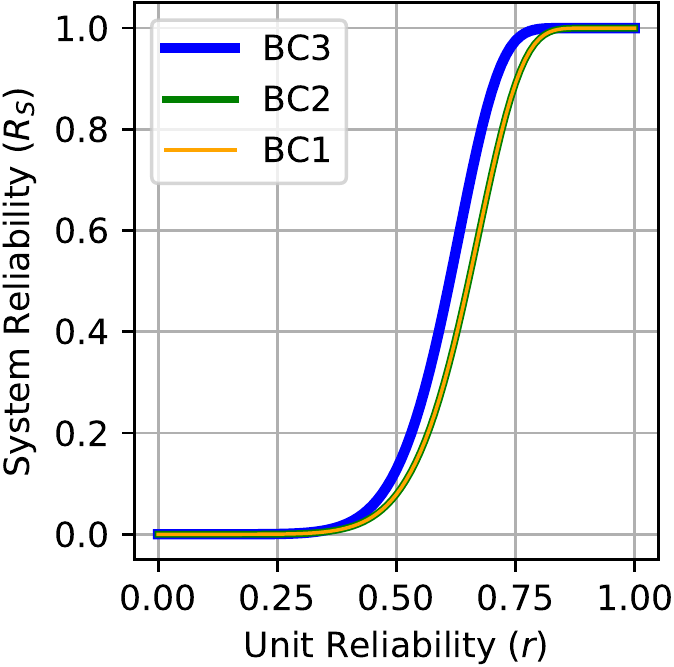}}
		\label{fig:type1_k_8_n_14}		
	}
	
	\caption{System reliability plots with varying unit reliability $r$ for the three balance conditions (upper row: $k$-out-of-12, lower row: $k$-out-of-14)}
	\label{fig:reliability_comparison_type_1}
    \end{figure}
	
    Fig.~\ref{fig:reliability_comparison_type_2} shows the system reliability with varying $k$'s and $r$'s for the systems with $n=12$ (upper row) and $n=14$ (lower row) under each balance condition. Similar to the observations from the previous plots, the system reliability shows ``S''-shaped curve in terms of unit reliability $r$ with the increasing trends from BC1 to BC2 to BC3, for all the cases. In addition, we find the decreasing tendency of $R_S$ as $k$ increases because the systems with larger $k$ require more non-failed units for the system to be operational. Note that the degree of decrement looks more significant when $k$ increases from an even number to an odd number (say $k_o$), for example, see the changing slopes of $R_S$ for $k=4,5,6$ for each plot in Figs.~\ref{fig:type2_n_14_BC1}--\ref{fig:type2_n_14_BC3}. This phenomenon aligns with our intuition because maintaining the physical balance with the odd number of units must be more difficult than with the even number of units. Hence, there must be only few additional minimum tie-sets for the system with $k=k_o$ compared to the system with $k=k_o + 1$ (the smallest even number greater than $k_o$).
    \begin{figure}[H]
	\centering
	\subfloat[][$n=12$, BC1]
	{
		\centering\resizebox{0.31\textwidth}{!}{\includegraphics{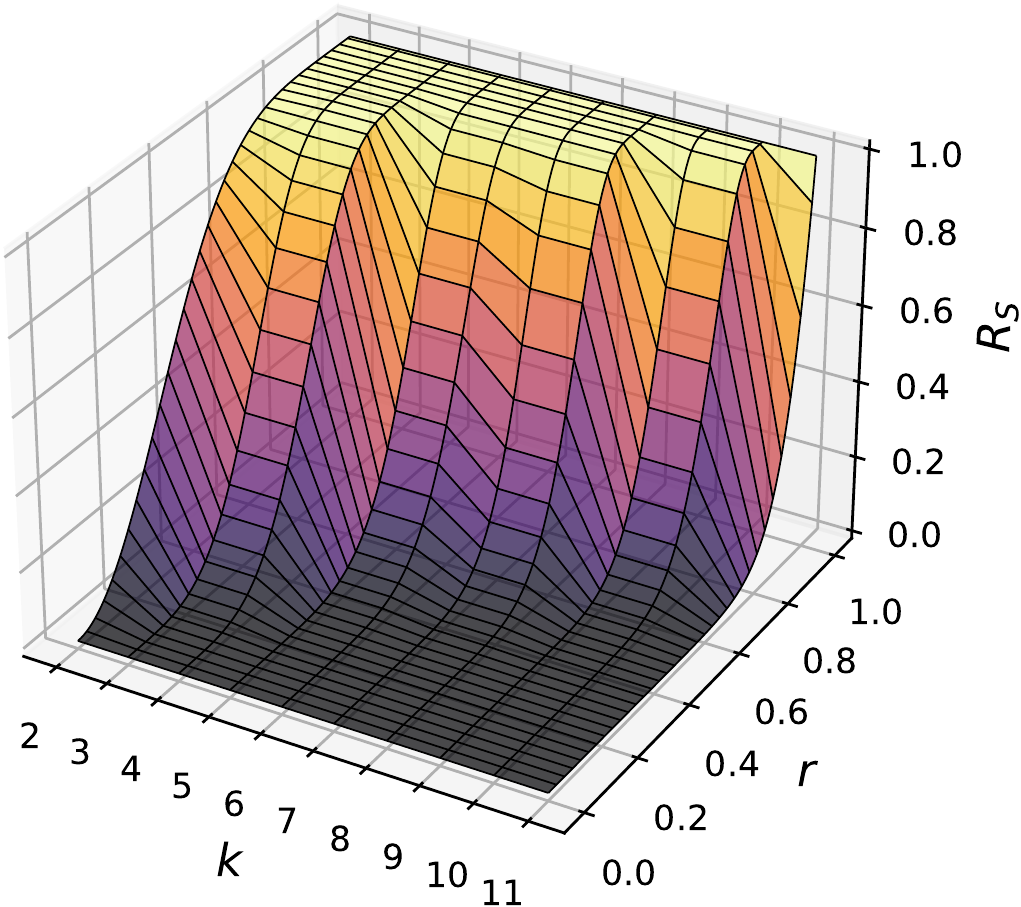}}
		\label{fig:type2_n_12_BC1}		
	}
	~
	\subfloat[][$n=12$, BC2]
	{
		\centering\resizebox{0.31\textwidth}{!}{\includegraphics{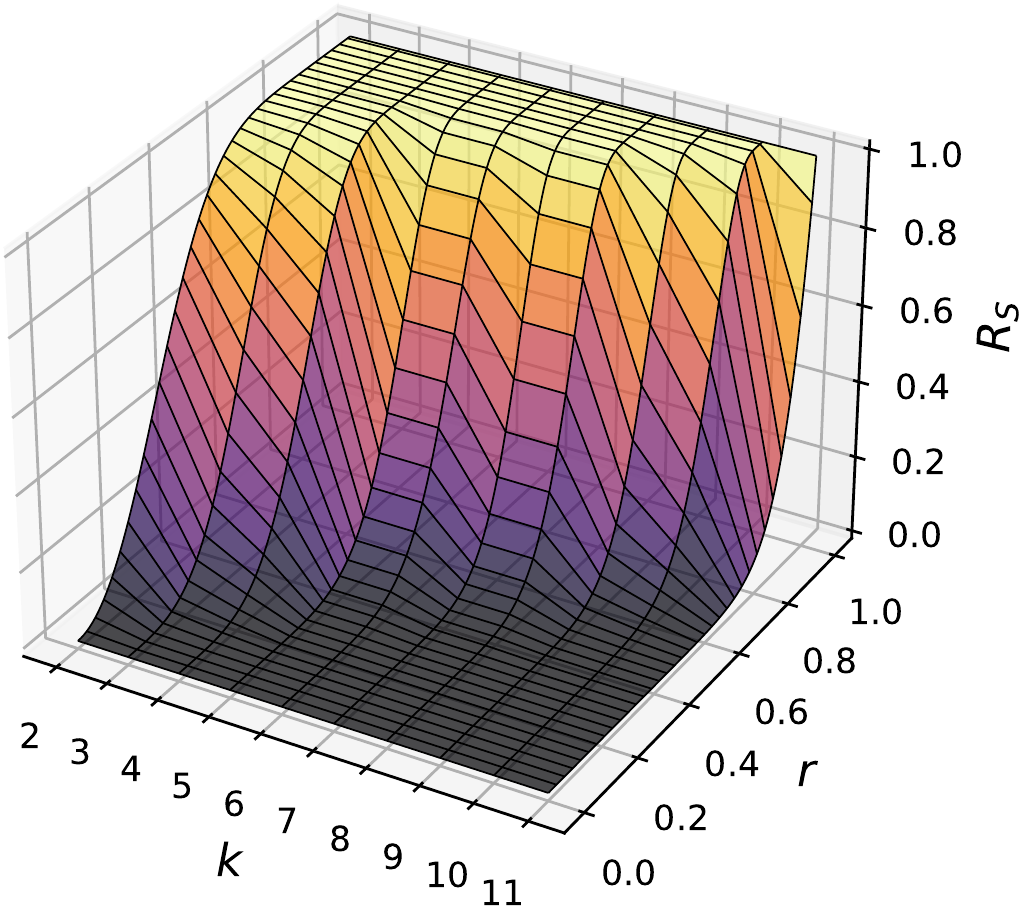}}
		\label{fig:type2_n_12_BC2}		
	}
	~
	\subfloat[][$n=12$, BC3]
	{
		\centering\resizebox{0.31\textwidth}{!}{\includegraphics{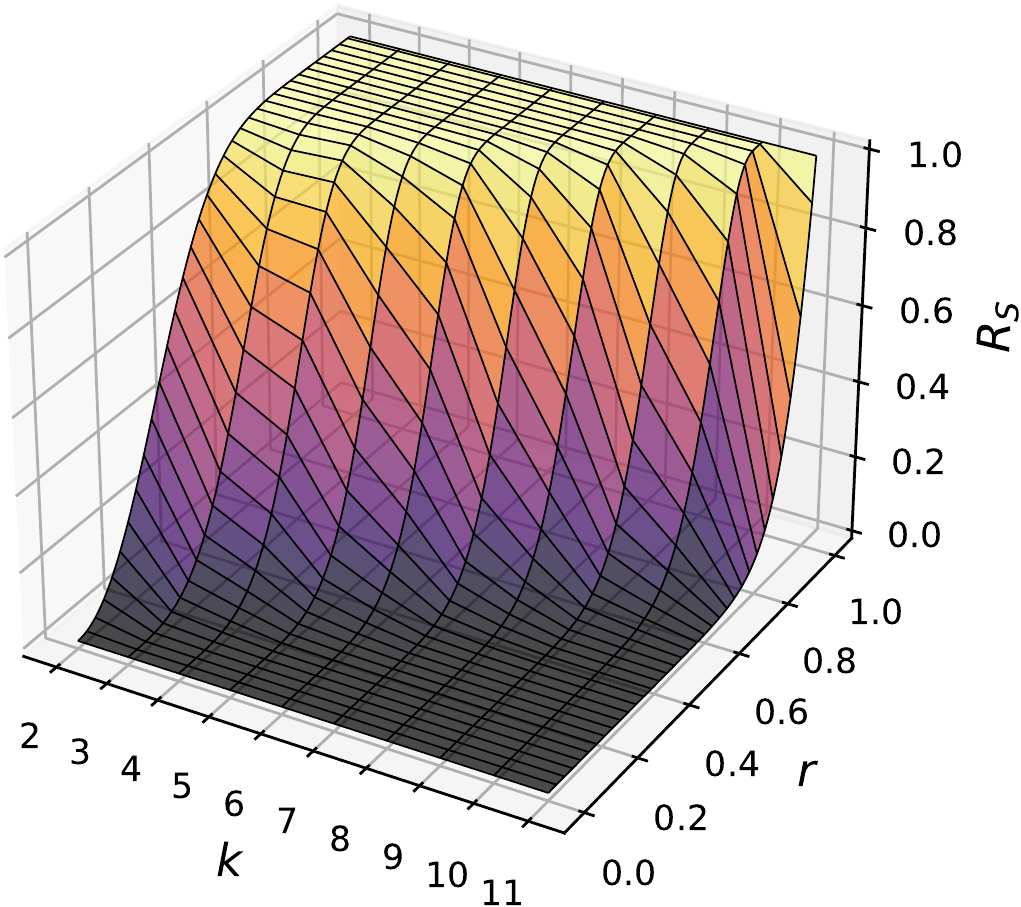}}
		\label{fig:type2_n_12_BC3}		
	}
	
	\subfloat[][$n=14$, BC1]
	{
		\centering\resizebox{0.31\textwidth}{!}{\includegraphics{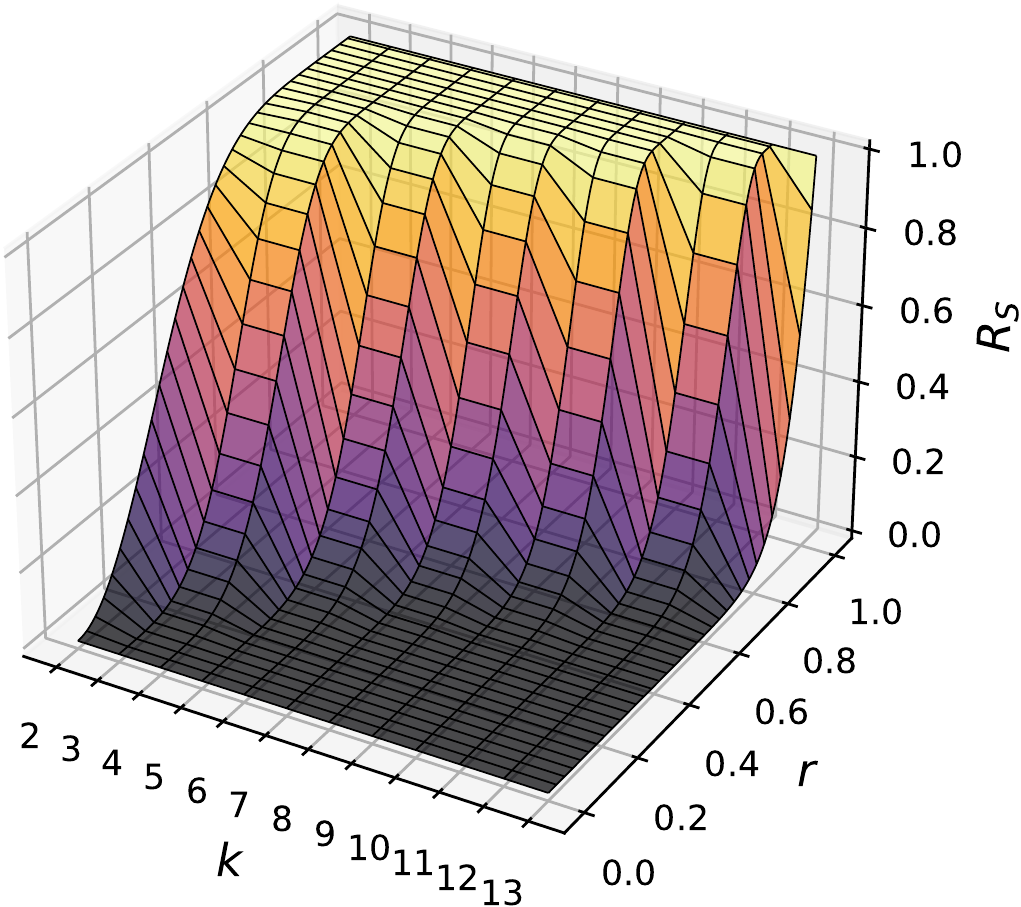}}
		\label{fig:type2_n_14_BC1}		
	}
	~
	\subfloat[][$n=14$, BC2]
	{
		\centering\resizebox{0.31\textwidth}{!}{\includegraphics{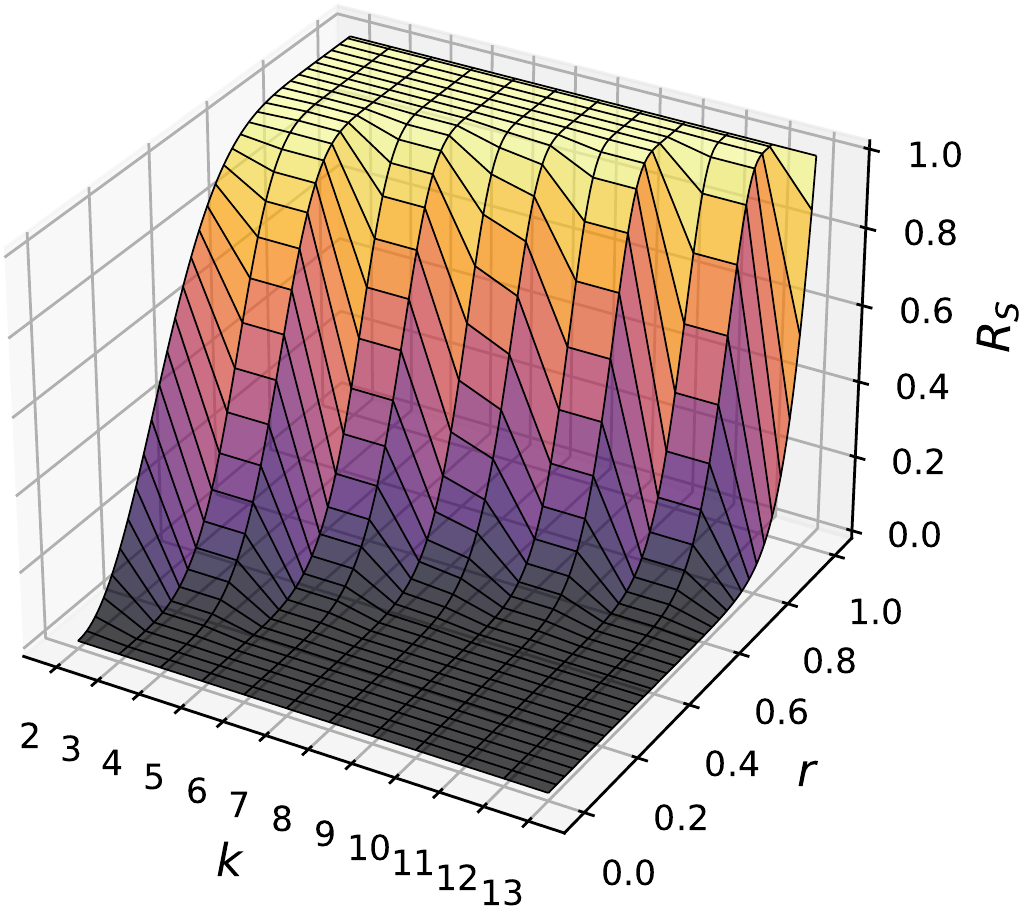}}
		\label{fig:type2_n_14_BC2}		
	}
	~
	\subfloat[][$n=14$, BC3]
	{
		\centering\resizebox{0.31\textwidth}{!}{\includegraphics{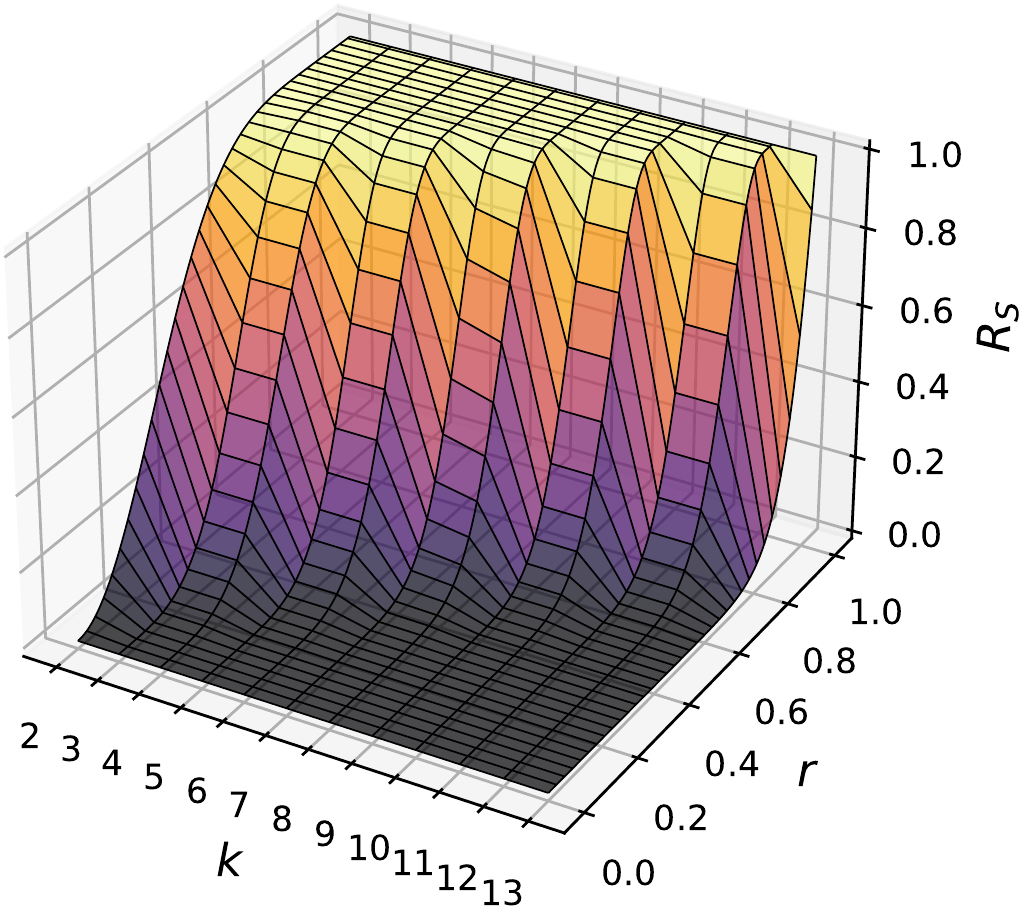}}
		\label{fig:type2_n_14_BC3}		
	}
	
	\caption{System reliability plots with varying $k$ and $r$ for the systems with $n=12$ (upper row) and $n=14$ (lower row) under each balance condition}
	\label{fig:reliability_comparison_type_2}
    \end{figure}
	
    Fig.~\ref{fig:reliability_comparison_type_3} provides the system reliability plots with varying $k$ and $n$ for $r=0.5,0.7,0.9$ and different balance conditions. It is worth mentioning that the reliability is only evaluated for the systems with $k$ and $n$ such that $k\le n-1$ hence the plots have empty space for some pairs of $(k,n)$. From the plots in Fig.~\ref{fig:reliability_comparison_type_3}, we understand the overall changing shape of the system reliability $R_S$ affected by changing $k$, $n$, and $r$. For example, we observe the decreasing trend of $R_S$ as $k$ increases for a fixed $n$ for all the cases. Similar to the observed phenomena in Fig.~\ref{fig:reliability_comparison_type_2}, the level of $R_S$'s decrement looks significant when $k$ is increased from an even number to an odd number. On the other hand, $R_S$ shows increasing trend as $n$ increases for a fixed $k$ for all the cases. The sensitivity of $R_S$ to $(k,n)$ looks greater for smaller $r$'s (see Figs.~\ref{fig:type3_r_0.5_BC1}--\ref{fig:type3_r_0.7_BC3}) whereas we mostly observe few changes except for the marginal area of the plots when $r$ is larger (see Figs.~\ref{fig:type3_r_0.9_BC1}--\ref{fig:type3_r_0.9_BC3}).

    \begin{figure}[H]
	\centering
	\subfloat[][$r=0.5$, BC1]
	{
		\centering\resizebox{0.31\textwidth}{!}{\includegraphics{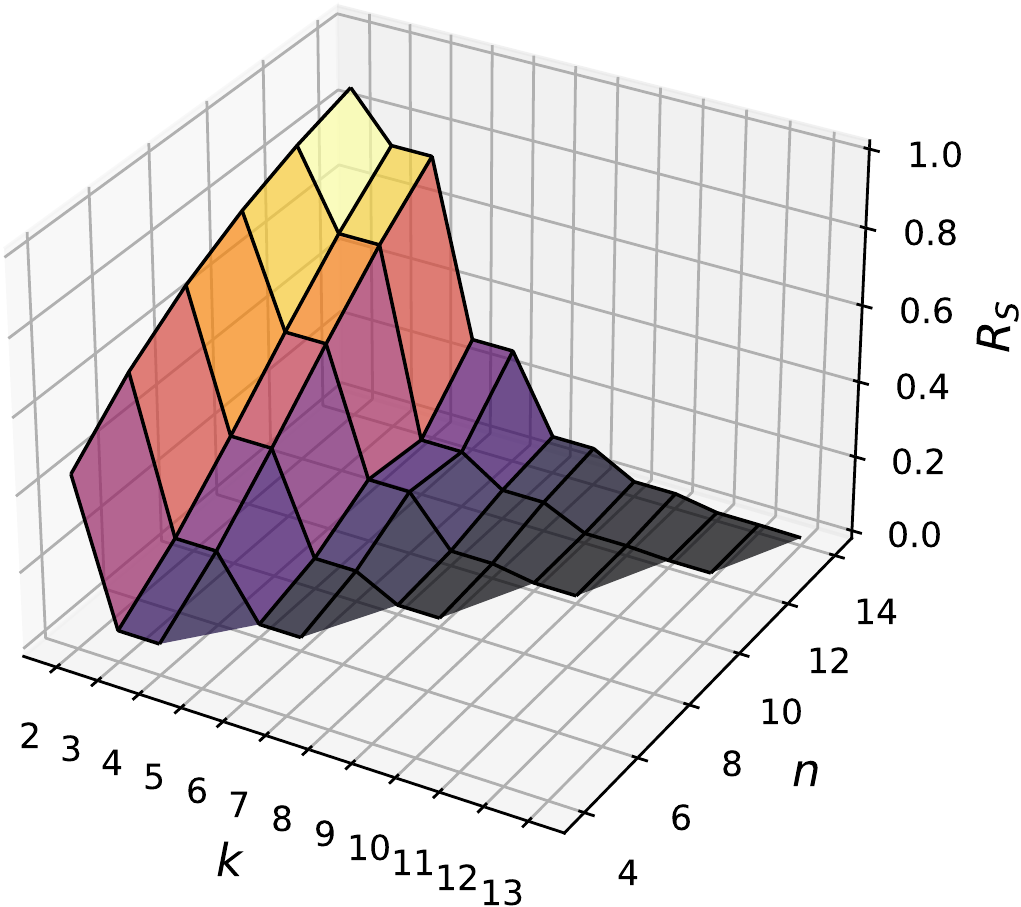}}
		\label{fig:type3_r_0.5_BC1}		
	}
	~
	\subfloat[][$r=0.5$, BC2]
	{
		\centering\resizebox{0.31\textwidth}{!}{\includegraphics{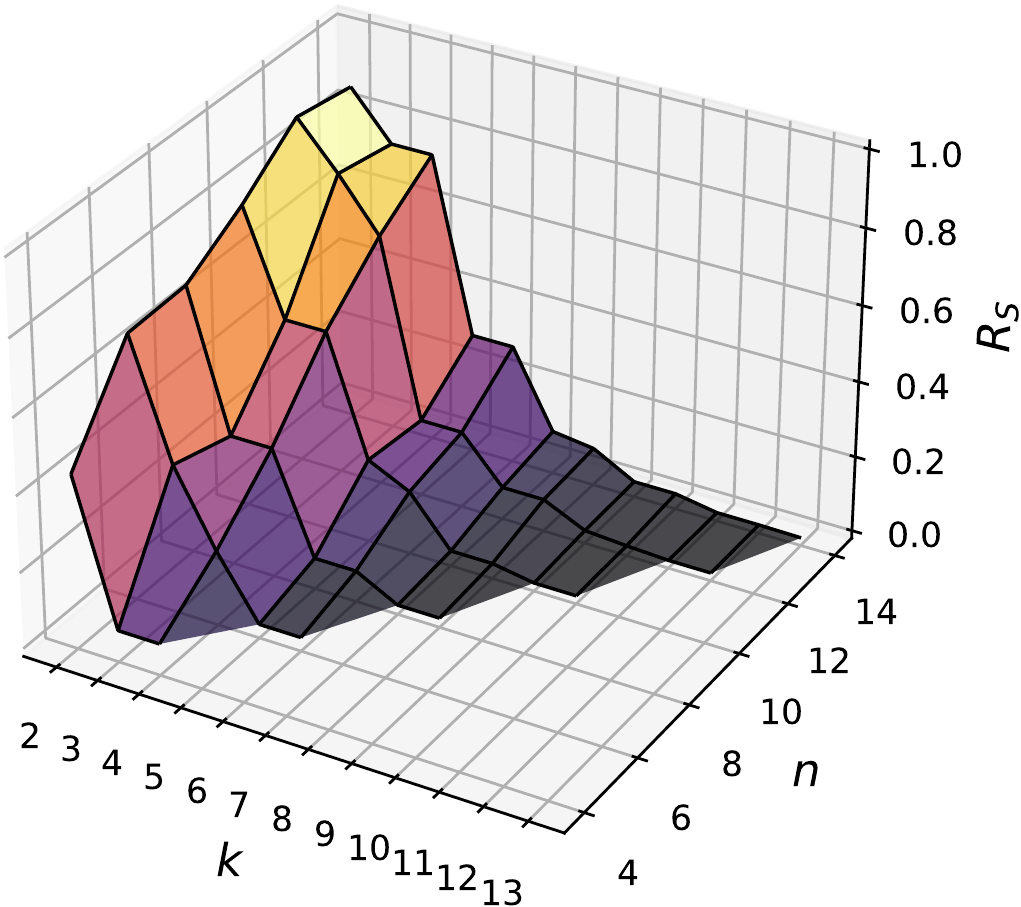}}
		\label{fig:type3_r_0.5_BC2}		
	}
	~
	\subfloat[][$r=0.5$, BC3]
	{
		\centering\resizebox{0.31\textwidth}{!}{\includegraphics{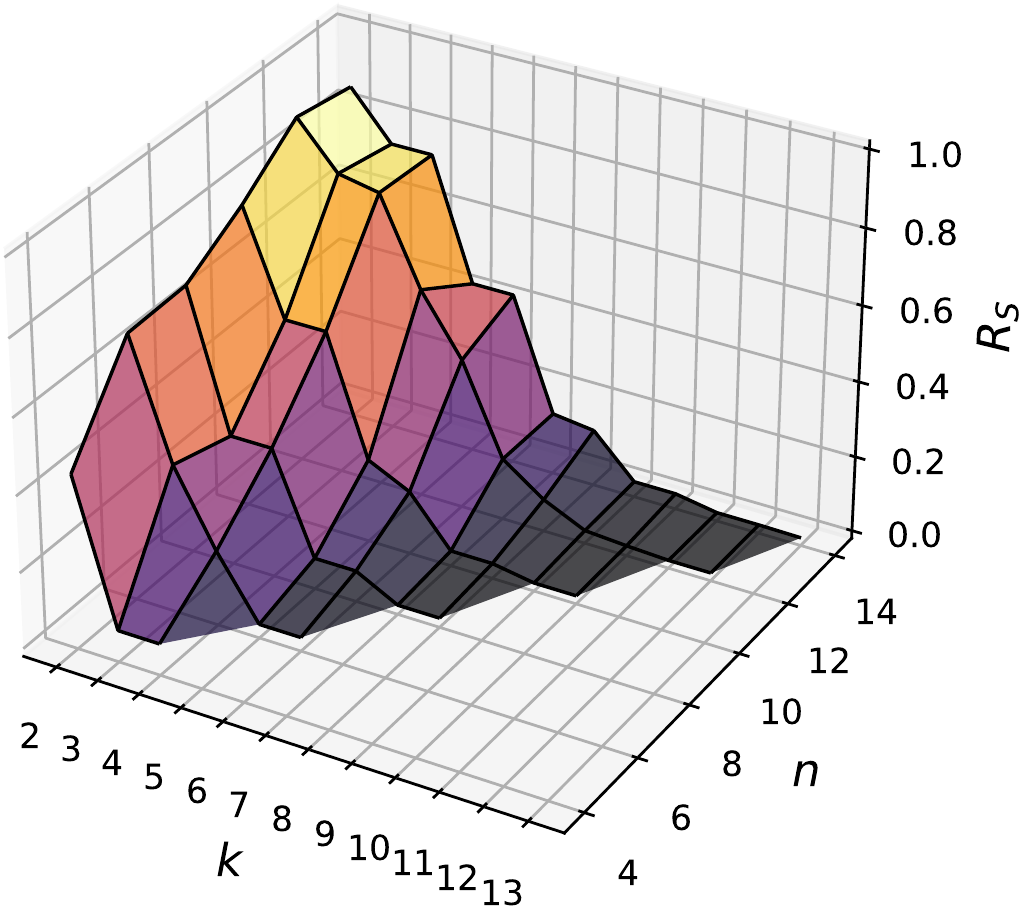}}
		\label{fig:type3_r_0.5_BC3}		
	}
	
	\subfloat[][$r=0.7$, BC1]
	{
		\centering\resizebox{0.31\textwidth}{!}{\includegraphics{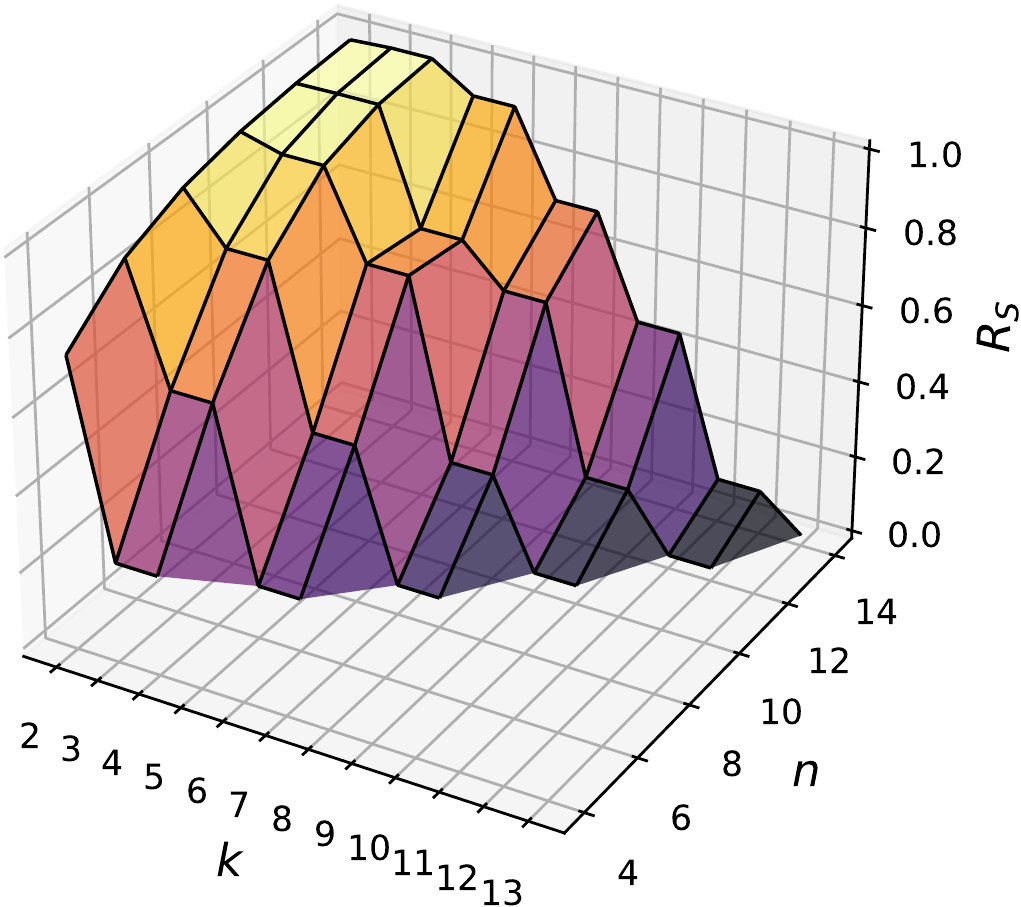}}
		\label{fig:type3_r_0.7_BC1}		
	}
	~
	\subfloat[][$r=0.7$, BC2]
	{
		\centering\resizebox{0.31\textwidth}{!}{\includegraphics{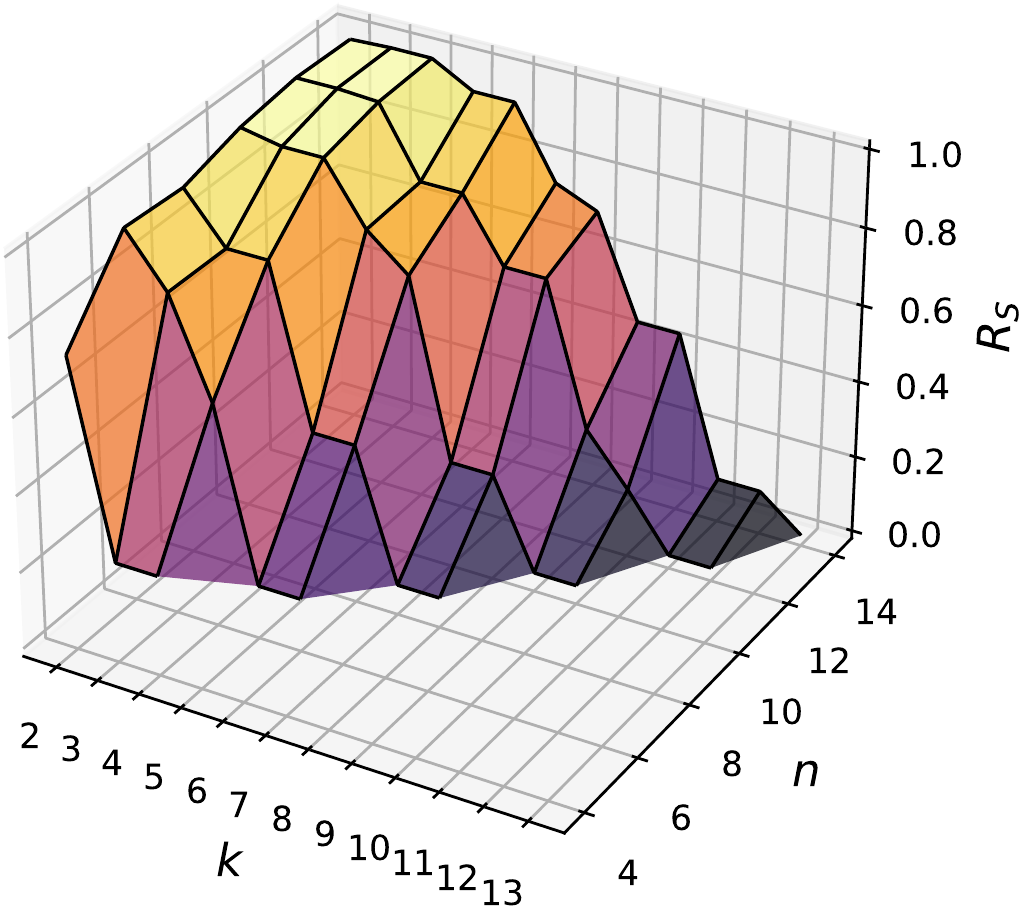}}
		\label{fig:type3_r_0.7_BC2}		
	}
	~
	\subfloat[][$r=0.7$, BC3]
	{
		\centering\resizebox{0.31\textwidth}{!}{\includegraphics{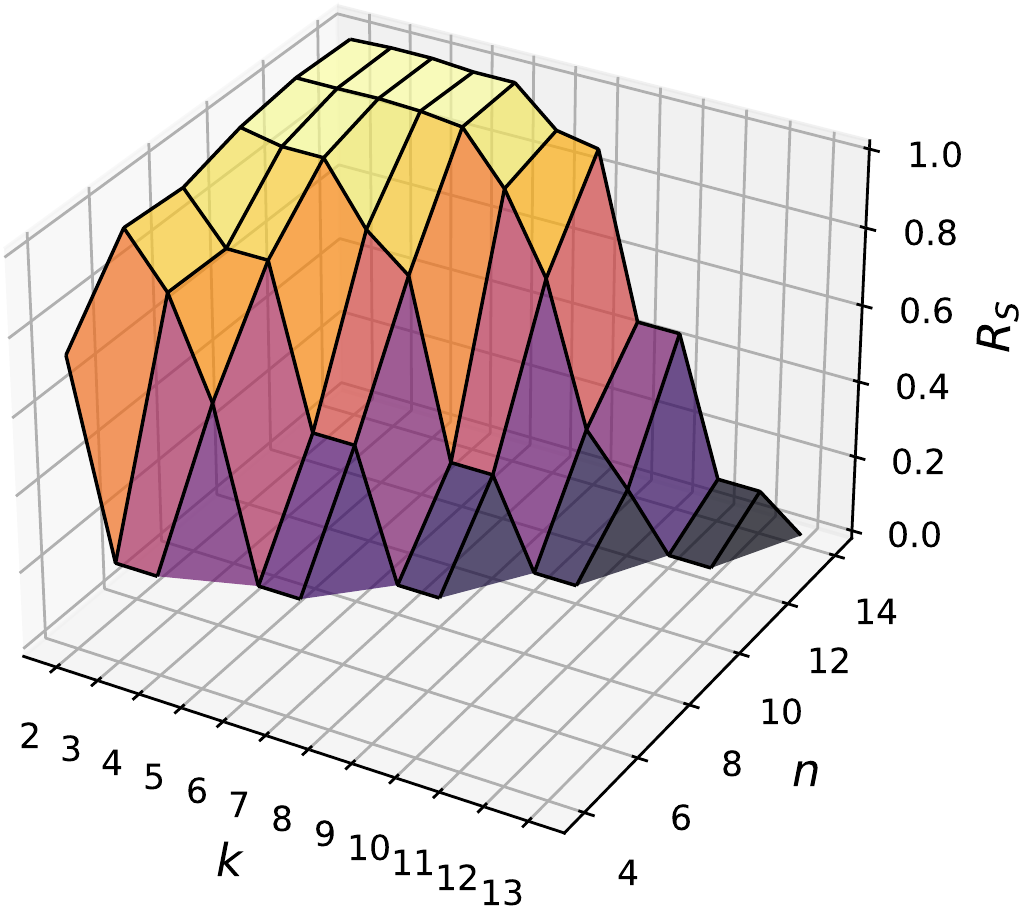}}
		\label{fig:type3_r_0.7_BC3}		
	}
	
	\subfloat[][$r=0.9$, BC1]
	{
		\centering\resizebox{0.31\textwidth}{!}{\includegraphics{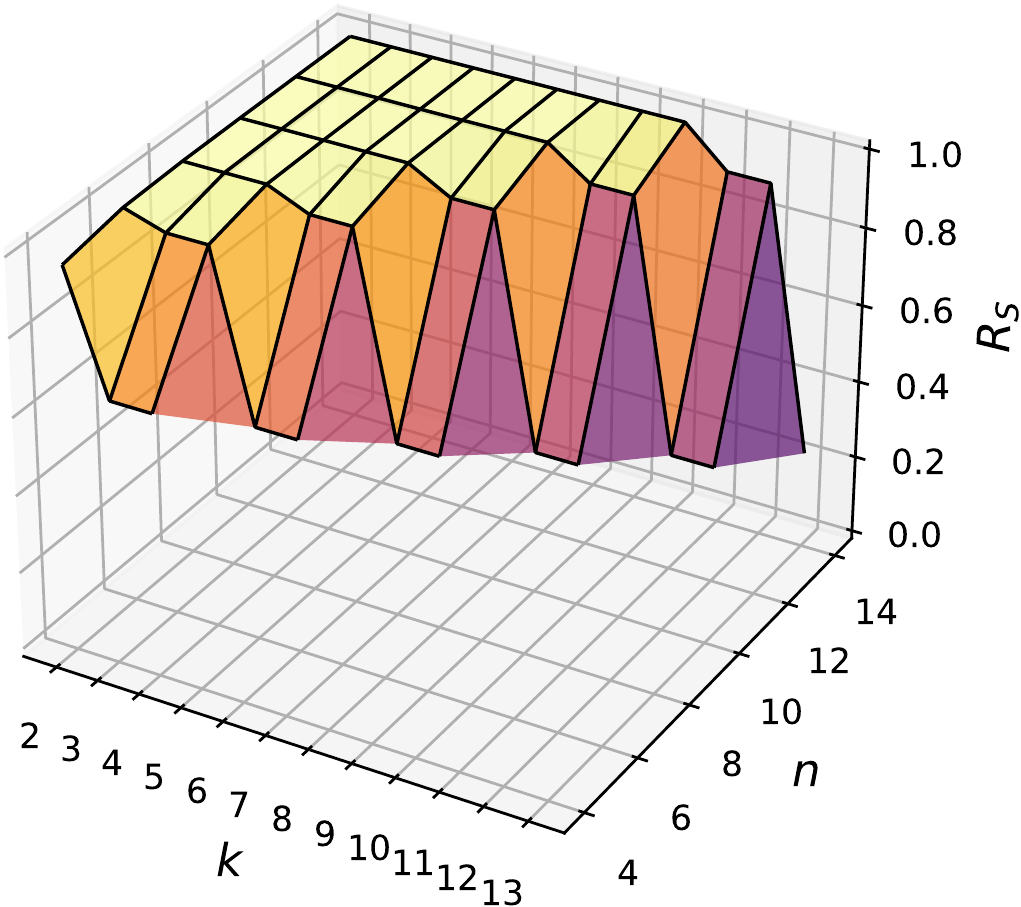}}
		\label{fig:type3_r_0.9_BC1}		
	}
	~
	\subfloat[][$r=0.9$, BC2]
	{
		\centering\resizebox{0.31\textwidth}{!}{\includegraphics{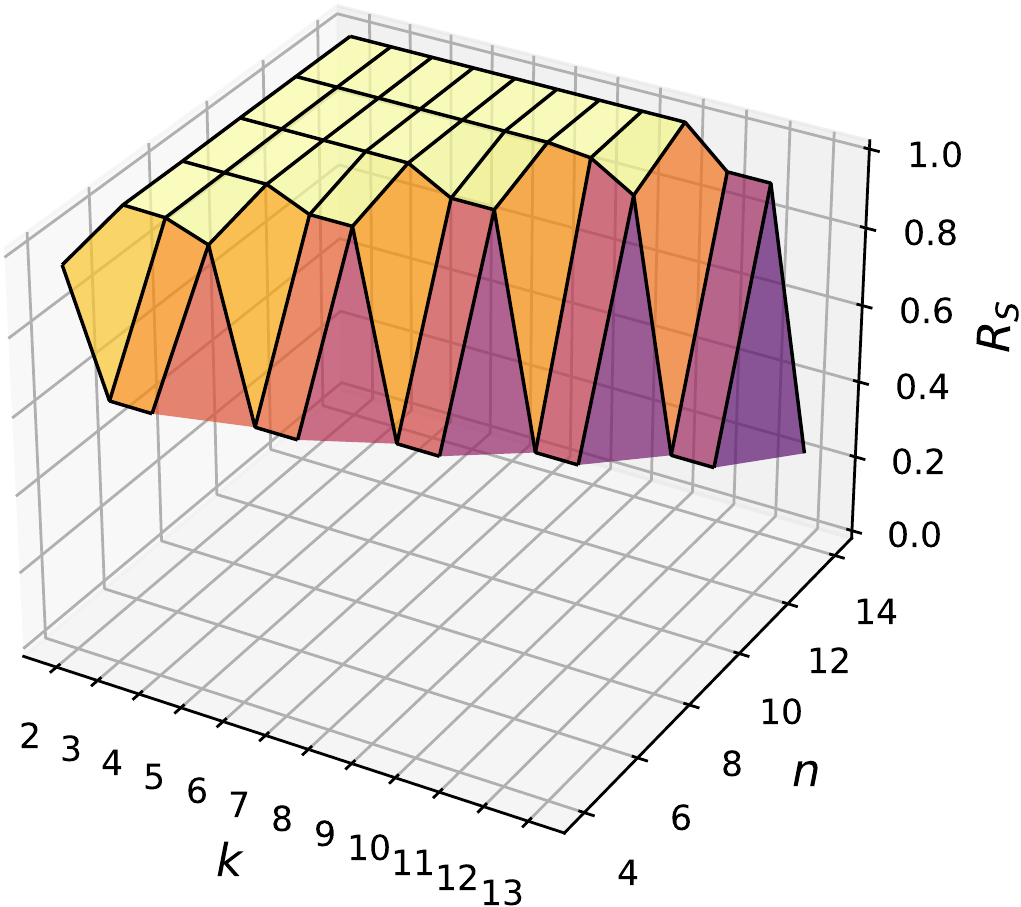}}
		\label{fig:type3_r_0.9_BC2}		
	}
	~
	\subfloat[][$r=0.9$, BC3]
	{
		\centering\resizebox{0.31\textwidth}{!}{\includegraphics{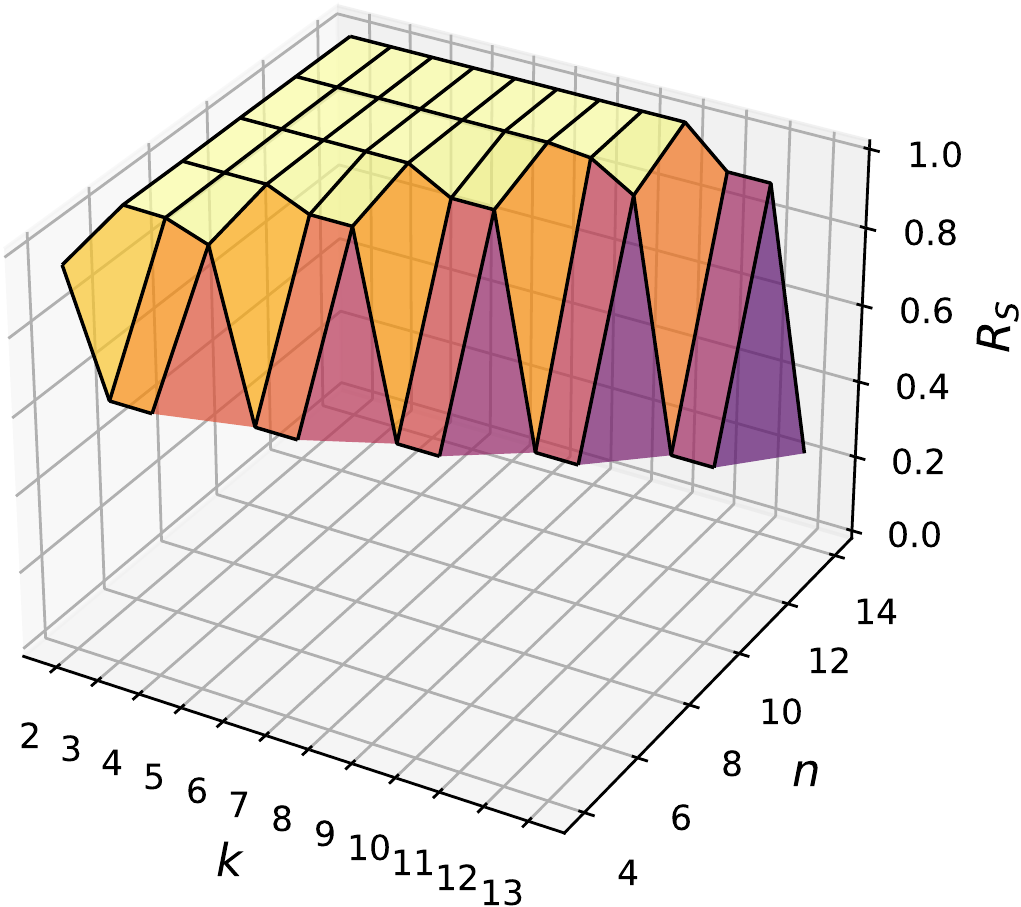}}
		\label{fig:type3_r_0.9_BC3}		
	}
	
	\caption{System reliability plots with varying $k$ and $n$ for the systems with $r=0.5,0.7,0.9$ under each balance condition}
	\label{fig:reliability_comparison_type_3}
    \end{figure}
	
    \subsection{Summary with interpretation}
    In this section, we summarize the experimental results followed by the corresponding interpretations. Firstly, we consistently observed $R_S^{\textrm{BC3}}\ge R_S^{\textrm{BC2}}\ge R_S^{\textrm{BC1}}$ and it demonstrates that the introduction of BC3 can enhance the system reliability of \ckngb systems. The analysis in Section~\ref{subsec:relationship} provides the supporting theory, which shows that BC3 covers BC2 and BC1. Although BC3 is considered as the most generalized balance condition for \ckngb systems to the best of our knowledge, the system reliability can be further improved if one can find another effective balance condition.
	
    Secondly, $R_S$ shows definite ``S''-shaped curve to the unit reliability $r$ when all the other parameters fixed. That is, the unit-wise reliability improvement is more effective in terms of enhancing the system reliability when $r$ is not around the extreme values $0$ or $1$. This tendency can be useful for the impact-effort analysis of unit-wise reliability improvement.
	
    Finally, it is trivial that the increment of $k$ results in the decrement of $R_S$. However, we consistently observed more significant $R_S$'s reduction when $k$ is increased from an even number to an odd number from Figs.~\ref{fig:reliability_comparison_type_2}-\ref{fig:reliability_comparison_type_3}. From this observation, we realize that determining $k$ in a \ckngb system as even as possible must be advantageous for most of the systems.

    \section{Conclusion} \label{sec:conclusion}
    This article examines the reliability of \ckngb systems based on several definitions of the balance condition. Motivated by two previously investigated conditions, BC1 (symmetry) and BC2 (proportionality), we propose a new balance condition BC3 based on the center of gravity concept. Among the three conditions, the newly proposed BC3 is advantageous in terms of its simplicity and generality because both BC1 and BC2 can be regarded as special cases of BC3. We investigate the inclusion relationships among the balance conditions through mathematical proofs and several numerical examples. For system reliability, we apply the minimum tie-set method in which the system is interpreted as a parallel system consists of minimum tie-sets. Since the system reliability is proportional to the number of minimum tie-sets, considering BC3 can discover more minimum tie-sets, thereby enhancing system reliability. Extensive numerical studies demonstrate the effect of considering BC3 and the changing trends of system reliability for varying system parameters.
	
    There are several future research suggestions that extend this paper. First, the reliability evaluation method used in this paper requires complete enumeration of all working states, which can be computationally challenging when $k$ and/or $n$ are large. In this regard, a more efficient method that reasonably approximates system reliability in a shorter time can be explored. Second, assumptions considered throughout this paper can be relaxed. For example, we have assumed that survival and failure probabilities for each unit are constants at any time and that the unit is only of two states (failed and non-failed). The constant probability assumption can be relaxed by introducing a failure time distribution such as exponential and Weibull, etc. Also, considering multi-state units would relax the binary-state assumption, thereby opening up another interesting research direction regarding \ckngb systems.

    \section*{Acknowledgment}
    This work was supported in part by the National Research Foundation of Korea (NRF) grants (No. 2021R1G1A1094924, 2021R1A2C1094699 and 2021R1A4A1031019) funded by the Korea government (Ministry of Science and ICT, MSIT) and in part by Korea Institute for Advancement of Technology (KIAT) grant funded by the Korea Government (MOTIE) (P0008691, HRD Program for Industrial Innovation).
	
    \appendix
    \section*{Appendix}
    \section{Proof of Proposition \ref{prop:b1b3}} \label{app:A}
    Consider a conventional two-dimensional coordinate system with x-axis and y-axis on which the units are circularly and evenly located. Assume that we have a subsystem $U=\{u_1,u_2,...,u_{|U|}\}$ that satisfies BC1. Then, the subsystem $U$ is symmetric with respect to at least a pair of perpendicular axes by the definition~\ref{def:bc1}. That is, for any operating unit $u\in U$, we have another operating unit $u'\in U$ such that the angle between $u$ and $u'$ is $\pi$. Letting $(x_u,y_u)$ be the x-y coordinate of $u$, we have
    \begin{align*}
	\begin{bmatrix}
		x_u \\
		y_u 
	\end{bmatrix}
	= 
	\begin{bmatrix}
		r\cos{\theta} \\
		r\sin{\theta}
	\end{bmatrix}
	,
    \end{align*}
    where $r$ is the Euclidean distance from the origin $(0,0)$ to the unit $u$ and $\theta$ is the angle between x-axis and unit $u$. Since the angle between $u$ and $u'$ is $\pi$, we have
    \begin{align*}
	\begin{bmatrix}
		x_{u'} \\
		y_{u'} 
	\end{bmatrix}
	= 
	\begin{bmatrix}
		r\cos{(\theta+\pi)} \\
		r\sin{(\theta+\pi)}
	\end{bmatrix}
	.
    \end{align*}
    
    Letting $(\bar{x}_{\{u,u'\}},\bar{y}_{\{u,u'\}})$ be the coordinate of a center of gravity formed by the units $u$ and $u'$, we have
    \begin{align*}
	\begin{bmatrix}
		\bar{x}_{\{u,u'\}} \\
		\bar{y}_{\{u,u'\}} 
	\end{bmatrix}
	=
	\begin{bmatrix}
		\frac{1}{2}\{r\cos{(\theta)}+r\cos{(\theta+\pi)}\} \\
		\frac{1}{2}\{r\sin{(\theta)}+r\sin{(\theta+\pi)}\}
	\end{bmatrix}
	=
	\begin{bmatrix}
		-r\left\{\cos{\left(\theta+\frac{\pi}{2}\right)}\cos{\left(\frac{\pi}{2}\right)}\right\} \\
		r\left\{\sin{\left(\theta+\frac{\pi}{2}\right)}\cos{\left(\frac{\pi}{2}\right)}\right\}
	\end{bmatrix}
	=
	\begin{bmatrix}
		0 \\
		0
	\end{bmatrix}
	,
    \end{align*}
    applying the simple trigonometric angle addition formula and because $\cos(\pi/2)=0$.
    
    Since the above result applies for any unit in a subsystem satisfying BC1, the center of gravity of such a subsystem is always $(0,0)$. Therefore, BC1 implies BC3.
	
    \section{Proof of Proposition \ref{prop:b2b3}} \label{app:B}
    \renewcommand{\theequation}{\thechapter.\arabic{equation}}
    \numberwithin{equation}{section}
    Assume that we have a subsystem $U=\{u_1,u_2,...,u_{|U|}\}$ that satisfies BC2. Then, the sector angles are either identical for all $s=1, 2, \dots, N$, as described in remark \ref{remark:BC2}(\ref{remark:BC2a}), or the opposite angles of one another as described in remark \ref{remark:BC2}(\ref{remark:BC2b}). In this proof, we will only show that remark \ref{remark:BC2}(\ref{remark:BC2a}) implies BC3 because remark \ref{remark:BC2}(\ref{remark:BC2b}) is trivially considered as satisfying BC1 (symmetry) hence implies BC3 by proposition \ref{prop:b1b3}.
 
    Given the identical sector angles, we can partition the subsystem $U$ into $N_\textrm{SG}$ mutually exclusive and exhaustive subgroups $U_j$'s each of which have the same number of units such that
    \begin{align*}
        U = \{u_1, u_2, \dots, u_{|U|}\} = \bigcup_{j=1}^{N_\textrm{SG}} U_j\quad\textrm{and}\quad|U_1|=|U_2|=\cdots =|U_{N_\textrm{SG}}|. 
    \end{align*}

    Because all the subgroups are homogeneous and all the sector angles are congruent, we can interpret the subsystem $U$ as an equivalent system $V$ comprised of $N_\textrm{SG}$ \textit{virtually aggregated} units, say units $v_j$'s, such that $V=\{v_1,...,v_{N_\textrm{SG}}\}$ where the angle between any two such units is $\theta=2\pi/N_\textrm{SG}$. Considering the physical properties, we define the x-y coordinate of $v_j$ to be the center of gravity formed by the units in the associated subgroup $U_j$: $(x_{v_j},y_{v_j})\equiv(\bar{x}_{U_j}, \bar{y}_{U_j})$. Then, the center of gravity formed by the subsystem $U$, $(\bar{x}_U,\bar{y}_U)$ can be expressed as follows:
    \begin{align*}
        \begin{bmatrix}
            \bar{x}_U \\
            \bar{y}_U 
	\end{bmatrix}
        =
        \begin{bmatrix}
            \bar{x}_V \\
            \bar{y}_V 
	\end{bmatrix}
        =
        \begin{bmatrix}
		  \frac{1}{N_\textrm{SG}}\sum_{j=1}^{N_\textrm{SG}}x_{v_j}\\
		  \frac{1}{N_\textrm{SG}}\sum_{j=1}^{N_\textrm{SG}}y_{v_j}
	\end{bmatrix}
        =
        \begin{bmatrix}
		  \frac{1}{N_\textrm{SG}}\sum_{j=1}^{N_\textrm{SG}}\bar{x}_{U_j}\\
		  \frac{1}{N_\textrm{SG}}\sum_{j=1}^{N_\textrm{SG}}\bar{y}_{U_j}
        \end{bmatrix}
        .
    \end{align*}
    
    Let $r$ be the Euclidean distance between the origin $(0,0)$ and a virtual unit $v$. Without loss of generality, we adjust the coordinate of unit $v_1$ to $(r,0)$ and arrange all the other units $v_j$'s counterclockwise in ascending order such that the x-y coordinate of unit $v_j$, $(x_{v_j}, y_{v_j})$, becomes $\left(\cos\left(\left(j-1\right)\theta\right),\sin\left(\left(j-1\right)\theta\right) \right)$ for $j=1,..., N_\textrm{SG}$ where $\theta=2\pi/N_{\textrm{SG}}$. 
    Then, $(\bar{x}_U,\bar{y}_U)$ is
     \begin{align}
        \begin{bmatrix}
            \bar{x}_U \\
            \bar{y}_U 
	\end{bmatrix}
        =
        \begin{bmatrix}
		  \frac{1}{N_\textrm{SG}}\sum_{j=1}^{N_\textrm{SG}}r\cos{\left((j-1)\frac{2\pi}{N_\textrm{SG}}\right)}\\
		  \frac{1}{N_\textrm{SG}}\sum_{j=1}^{N_\textrm{SG}}r\sin{\left((j-1)\frac{2\pi}{N_\textrm{SG}}\right)}	
        \end{bmatrix}
        .
        \label{eq:cog1}
    \end{align}   

    To show that the RHS of Eq.~(\ref{eq:cog1}) is indeed the origin $(0,0)$, we will use the following lemma.
    \begin{lemma}[Knapp \cite{K2009}]
        If $a,d\in\mathbb{R}$, $d\neq 0$, and $n$ is a positive integer, then 
        \begin{align}
            \sum_{k=0}^{n-1}\cos{(a+kd)}=\frac{\sin{(nd/2)}}{\sin{(d/2)}} \cos{\left(a+\frac{(n-1)d}{2}\right)} \label{lemma:cos}
        \end{align}
        and
        \begin{align}
            \sum_{k=0}^{n-1}\sin{(a+kd)}=\frac{\sin{(nd/2)}}{\sin{(d/2)}} \sin{\left(a+\frac{(n-1)d}{2}\right)}. \label{lemma:sin}
        \end{align}
    \end{lemma}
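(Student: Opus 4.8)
The plan is to prove both identities at once by packaging them into a single complex-exponential sum and evaluating the resulting geometric series. First I would set $z=e^{id}$ and observe, via Euler's formula, that the two target sums are precisely the real and imaginary parts of $S:=\sum_{k=0}^{n-1}e^{i(a+kd)}=e^{ia}\sum_{k=0}^{n-1}z^{k}$. The hypothesis $d\neq 0$ (more precisely $\sin(d/2)\neq 0$, which is implicit because $\sin(d/2)$ sits in the denominator of the claimed formulas) guarantees $z=e^{id}\neq 1$, so the finite geometric-series formula applies and yields $S=e^{ia}\,(z^{n}-1)/(z-1)=e^{ia}(e^{ind}-1)/(e^{id}-1)$.

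Next I would symmetrize the numerator and denominator to extract the Dirichlet-kernel form. Writing $e^{ind}-1=e^{ind/2}\bigl(e^{ind/2}-e^{-ind/2}\bigr)=2i\,e^{ind/2}\sin(nd/2)$ and, identically, $e^{id}-1=2i\,e^{id/2}\sin(d/2)$, the factors of $2i$ cancel and the surviving exponentials combine into $e^{ia}\,e^{i(n-1)d/2}$, giving $S=\dfrac{\sin(nd/2)}{\sin(d/2)}\,e^{i\left(a+(n-1)d/2\right)}$. Taking $\mathrm{Re}(S)$ then produces Eq.~(\ref{lemma:cos}) and taking $\mathrm{Im}(S)$ produces Eq.~(\ref{lemma:sin}), which completes the argument.

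A purely real alternative, should one wish to avoid complex numbers, is to multiply the cosine sum by $2\sin(d/2)$ and invoke the product-to-sum identity $2\sin(d/2)\cos(a+kd)=\sin\!\bigl(a+(k+\tfrac12)d\bigr)-\sin\!\bigl(a+(k-\tfrac12)d\bigr)$; the right-hand side telescopes over $k=0,\dots,n-1$, leaving $\sin\!\bigl(a+(n-\tfrac12)d\bigr)-\sin\!\bigl(a-\tfrac12 d\bigr)$, which a second product-to-sum step rewrites as $2\sin(nd/2)\cos\!\bigl(a+(n-1)d/2\bigr)$, and dividing back by $2\sin(d/2)$ recovers Eq.~(\ref{lemma:cos}); the sine case is verbatim. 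I do not anticipate a genuine obstacle here, as the computation is elementary and self-contained. The only point demanding care is the division step: both routes implicitly require $\sin(d/2)\neq 0$ so that $e^{id}\neq 1$. In every place the lemma is applied in this paper one has $d=2\pi/N_{\textrm{SG}}$ with $N_{\textrm{SG}}\ge 2$, whence $d/2=\pi/N_{\textrm{SG}}\in(0,\pi)$ and $\sin(d/2)>0$, so the formulas apply without caveat.
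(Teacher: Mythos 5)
Your proof is correct, but there is nothing in the paper to compare it against: the paper does not prove this lemma at all. It is imported wholesale from Knapp \cite{K2009} (it is the classical identity for sums of cosines and sines in arithmetic progression), and the appendix merely instantiates it with $a=0$, $n=N_\textrm{SG}$, $d=2\pi/N_\textrm{SG}$ to kill both coordinate sums via $\sin\pi=0$. So your argument supplies exactly what the paper delegates to the citation. Both of your routes are sound: the complex-exponential route (geometric series with $z=e^{id}\neq 1$, then the factorization $e^{i\phi}-1=2ie^{i\phi/2}\sin(\phi/2)$) is the shortest and yields both identities at once as real and imaginary parts of a single sum; the telescoping route via $2\sin(d/2)\cos(a+kd)=\sin\bigl(a+(k+\tfrac12)d\bigr)-\sin\bigl(a+(k-\tfrac12)d\bigr)$ is more elementary and stays entirely within real trigonometry. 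You also make one genuine improvement on the statement as quoted: the hypothesis $d\neq 0$ is strictly insufficient, since for $d\in 2\pi\mathbb{Z}\setminus\{0\}$ the left-hand sides equal $n\cos a$ and $n\sin a$ while the right-hand sides are the indeterminate form $0/0$; the correct hypothesis is $\sin(d/2)\neq 0$, i.e.\ $d\notin 2\pi\mathbb{Z}$. Your closing check that the paper's only application has $d=2\pi/N_\textrm{SG}$ with $N_\textrm{SG}\ge 2$, hence $\sin(d/2)>0$, is precisely the observation needed to confirm that this sharpened hypothesis is met where the lemma is actually used.
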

    
    First, substituting $k=j-1$, $a=0$, $n=N_\textrm{SG}$, and $d=2\pi/N_\textrm{SG}$ into the both sides of Eq.~(\ref{lemma:cos}), we have the following equation:
    \begin{align}
        \sum_{j=1}^{N_\textrm{SG}}\cos{\left((j-1)\frac{2\pi}{N_\textrm{SG}}\right)}=\frac{\sin{\pi}}{\sin{(\pi/N_\textrm{SG})}} \cos{\left(\frac{(N_\textrm{SG}-1)\pi}{N_\textrm{SG}}\right)}, \label{eq:cog_x}
    \end{align}    
    which reduces to $0$ because we have $\sin{\pi}=0$ in the RHS.

    Similarly, substituting $k=j-1$, $a=0$, $n=N_\textrm{SG}$, and $d=2\pi/N_\textrm{SG}$ into the both sides of Eq.~(\ref{lemma:sin}), we have the following equation:
    \begin{align}
        \sum_{j=1}^{N_\textrm{SG}}\sin{\left((j-1)\frac{2\pi}{N_\textrm{SG}}\right)}=\frac{\sin{\pi}}{\sin{(\pi/N_\textrm{SG})}} \sin{\left(\frac{(N_\textrm{SG}-1)\pi}{N_\textrm{SG}}\right)}, \label{eq:cog_y}
    \end{align}    
    which reduces to $0$ because we have $\sin{\pi}=0$ in the RHS. Combining the results from Eqs.~(\ref{eq:cog_x}) and (\ref{eq:cog_y}), we notice that the RHS of Eq.~(\ref{eq:cog1}) is $[0,0]^\top$. Therefore, BC2 implies BC3.

    \bibliographystyle{plainnat}
    \bibliography{manuscript} 

\end{document}